\documentclass[acmsmall,nonacm]{acmart}

\setcopyright{acmcopyright}
\copyrightyear{2020}
\acmYear{2020}
\acmDOI{xxxxx}

%% These commands are for a PROCEEDINGS abstract or paper.
\acmConference[Conference '20]{Conference '20}{Date}{Location}
% \acmBooktitle{Woodstock '18: ACM Symposium on Neural Gaze Detection,
%   June 03--05, 2018, Woodstock, NY}
% \acmPrice{15.00}
% \acmISBN{978-1-4503-XXXX-X/18/06}

%%\acmSubmissionID{123-A56-BU3}

%algorithms
\usepackage{algpseudocode}
\algdef{SE}[SUBALG]{Indent}{EndIndent}{}{\algorithmicend\ }%
\algtext*{Indent}
\algtext*{EndIndent}
\usepackage{algorithm}

\usepackage{enumerate}

\usepackage{pictexwd,dcpic}
\usepackage{listings}

  %custom symbols
\newcommand{\mc}[1]{\ensuremath{\mathcal{#1}}}

\newcommand{\mb}[1]{\ensuremath{\mathbb{#1}}}
\newcommand{\msf}[1]{\ensuremath{\mathsf{#1}}}
\newcommand{\ra}{\rightarrow}
\newcommand{\pra}{\rightharpoonup}
\newcommand{\act}{\triangleright}
\newcommand{\N}{\mb{N}}
\newcommand{\Z}{\mb{Z}}

\newcommand{\R}{\mb{R}}

\lstdefinelanguage{icfppseudo}%
{morekeywords={abstract,case,catch,char,class,%
    def,else,extends,final,%
    if,import,%
    match,module,new,null,object,override,package,private,protected,%
    public,return,super,this,throw,trait,try,type,val,var,with,implicit,%
    macro,sealed,%
  },%
  sensitive,%
  morecomment=[l]//,%
  morestring=[b]",%
  morestring=[b]',%
  showstringspaces=false%
}[keywords,comments,strings]%

\lstset{tabsize=2,
basicstyle=\ttfamily\fontsize{8pt}{1em}\selectfont,
commentstyle=\color{gray}\itshape\ttfamily,
numbers=left, numberstyle=\scriptsize\color{gray}\ttfamily, language=icfppseudo,moredelim=[il][\sffamily]{?},mathescape=false,showspaces=false,showstringspaces=false,xleftmargin=15pt,escapechar=@, morekeywords=[1]{let,fn,val},deletekeywords={for},classoffset=0,belowskip=\smallskipamount
}

	%theorems, lemmas, etc.
\theoremstyle{plain}
\newtheorem{mythm}{Theorem}[section]

\newtheorem{myprop}[mythm]{Proposition}

\theoremstyle{definition}
\newtheorem{mydef}[mythm]{Definition}
\newtheorem{myex}[mythm]{Example}
\newtheorem{myrmk}[mythm]{Remark}
\newtheorem*{mynot}{Notation}

\begin{document}

\title[Composing and Decomposing Op-Based CRDTs with Semidirect Products]{Composing and Decomposing Op-Based CRDTs with Semidirect Products}

\author{Matthew Weidner}
%\orcid{TODO}
\email{maweidne@andrew.cmu.edu}
\affiliation{%
  \institution{Carnegie Mellon University}
  \streetaddress{5000 Forbes Avenue}
  \city{Pittsburgh}
  \state{PA}
  \postcode{15213}
}

\author{Heather Miller}
\email{heather.miller@cs.cmu.edu}
%\orcid{TODO}
\affiliation{%
  \institution{Carnegie Mellon University}
  \streetaddress{5000 Forbes Avenue}
  \city{Pittsburgh}
  \state{PA}
  \postcode{15213}
}

\author{Christopher Meiklejohn}
\email{cmeiklej@andrew.cmu.edu}
%\orcid{TODO}
\affiliation{%
  \institution{Carnegie Mellon University}
  \streetaddress{5000 Forbes Avenue}
  \city{Pittsburgh}
  \state{PA}
  \postcode{15213}
}

\begin{abstract}
Operation-based Conflict-free Replicated Data Types (CRDTs) are eventually consistent replicated data types that automatically resolve conflicts between concurrent operations.  Op-based CRDTs must be designed differently for each data type, and current designs use ad-hoc techniques to handle concurrent operations that do not naturally commute.  We present a new construction, the \textit{semidirect product} of op-based CRDTs, which combines the operations of two CRDTs into one while handling conflicts between their concurrent operations in a uniform way.  We demonstrate the construction's utility by using it to construct novel CRDTs, as well as decomposing several existing CRDTs as semidirect products of simpler CRDTs.  Although it reproduces common CRDT semantics, the semidirect product can be viewed as a restricted kind of operational transformation, thus forming a bridge between these two opposing techniques for constructing replicated data types.
\end{abstract}

%%
%% The code below is generated by the tool at http://dl.acm.org/ccs.cfm.
%% Please copy and paste the code instead of the example below.
%%
\begin{CCSXML}
<ccs2012>
   <concept>
       <concept_id>10011007.10011006.10011008.10011024.10011028</concept_id>
       <concept_desc>Software and its engineering~Data types and structures</concept_desc>
       <concept_significance>500</concept_significance>
       </concept>
   <concept>
       <concept_id>10003752.10003809.10010172</concept_id>
       <concept_desc>Theory of computation~Distributed algorithms</concept_desc>
       <concept_significance>300</concept_significance>
       </concept>
 </ccs2012>
\end{CCSXML}

\ccsdesc[500]{Software and its engineering~Data types and structures}
\ccsdesc[300]{Theory of computation~Distributed algorithms}

%% 
%% Keywords. The author(s) should pick words that accurately describe
%% the work being presented. Separate the keywords with commas.
\keywords{CRDTs, Operational Transformation, Eventual Consistency}

\maketitle

\section{Introduction}
Geo-replication of data is a technique used in many distributed applications, such as distributed databases and collaborative document editing programs, to reduce user-perceived latency and increase fault tolerance.  However, geo-replication is challenging: in order to achieve fault tolerance and high performance, replicas need to avoid the large communication costs of coordinating over geographically distributed networks.  This is typically achieved by allowing users to update their own replicas locally, then asynchronously propagate updates to other replicas in the background.  This poses an interesting concurrency challenge: what should the final outcome be in the face of conflicting concurrent operations by different replicas, once all of the updates are delivered to all replicas?

\textit{Conflict-free Replicated Data Types (CRDTs)} \cite{crdt_survey_2011, crdt_summary_2018} are a class of highly available replicated data types that provide a principled solution to this problem.  They allow developers to write applications using ordinary sequential data type operations acting on replicated state.  Updates to the replicated state are propagated asynchronously, with conflicts between concurrent operations resolved automatically using type-specific rules.  CRDTs have been deployed in highly available geo-replicated data storage systems such as Antidote \cite{antidote} and Riak \cite{riak_datatypes}.

Operation-based (op-based) CRDTs \cite{crdt_survey_2011, crdt_summary_2018} are a type of CRDT that function by converting sequential data type operations into messages, which are then broadcast to all replicas in \textit{causal order}, a natural partial order that can be enforced without coordination.  Replicas apply received messages to their state in such a way that they all end up in the same state, even if they receive concurrent messages in different orders.  A simple example is a counter supporting increment and decrement operations: these operations naturally commute, so they can be directly broadcast to all replicas, which then apply the operations in the order they receive them.  More generally, any commutative data type is trivially an op-based CRDT.

However, op-based CRDTs for non-commutative data types must explicitly handle conflicts between non-commuting concurrent messages.  For example, a set CRDT must decide what to do in the face of operations that concurrently add and remove an element to the set, while ensuring that all replicas eventually reach the same result.  Current designs handle such conflicts using ad-hoc techniques that differ for each data type.  It is thus difficult to design CRDTs for new data types or to add operations to existing CRDTs.

For example, suppose we are making a Slack-like chat application containing multiple ``channels'' that users can join.  We can store the map from channel names to the set of users in each channel using a CRDT map (dictionary) \cite{antidote, riak_datatypes}, with CRDT sets as the values, as shown in Figure \ref{fig:pseudo}.  In this example, there are two replicas, replica $A$ and replica $B$, stored at two different nodes, $A$ and $B$, that are executing the following application code.

\begin{figure}[!htb]
    \centering
    \begin{minipage}{.5\textwidth}
        \centering
        \textbf{\texttt{Node A}}
\begin{lstlisting}
// create the channels CRDT 
channels: CRDTMap[String, Set[String]] = 
  initReplicated(
    "general" -> ["alice", "bob"], 
    "random"-> ["alice", "charlie"]
  )

// replicate channels on node B
nodeB ! channels

// create a new channel "memes" in 
// channels and add alice
(m, channels) = 
  apply("memes", add("alice"), channels)

// send apply to node B
nodeB ! m
\end{lstlisting}
        % \caption{something}
        % \label{fig:nodea}
    \end{minipage}%
    % \begin{minipage}{.1\textwidth}
    % \end{minipage}
    \begin{minipage}{0.5\textwidth}
        \centering
        \textbf{\texttt{Node B}}
        \scriptsize
\begin{lstlisting}
// add dave to every channel using 
// higher-order map
(m, channels) = 
  homap(add("dave"), channels)

// send homap to node A
nodeA ! m
\end{lstlisting}
        \normalsize        
        % \caption{something}
        % \label{fig:nodeb}
        \hfill
    \end{minipage}
    
\caption{Example program showing distributed use of the replicated value \texttt{channels} in a Slack-like chat application by two nodes, $A$ and $B$.  $\msf{apply}(k, \msf{add}(v))$ adds $v$ to the set at key $k$, initializing the value at $k$ if necessary, while $\msf{homap}(\msf{add}(v))$ adds $v$ to every set in the map.  Unfortunately, these two operations do not commute, so after receiving each other's messages, $A$ and $B$ end up with different values for \texttt{channels} (see Figure \ref{fig:dictionaries}).}
\label{fig:pseudo}
\end{figure}

\begin{figure}[!htb]
\centering
\includegraphics[width=0.95\textwidth]{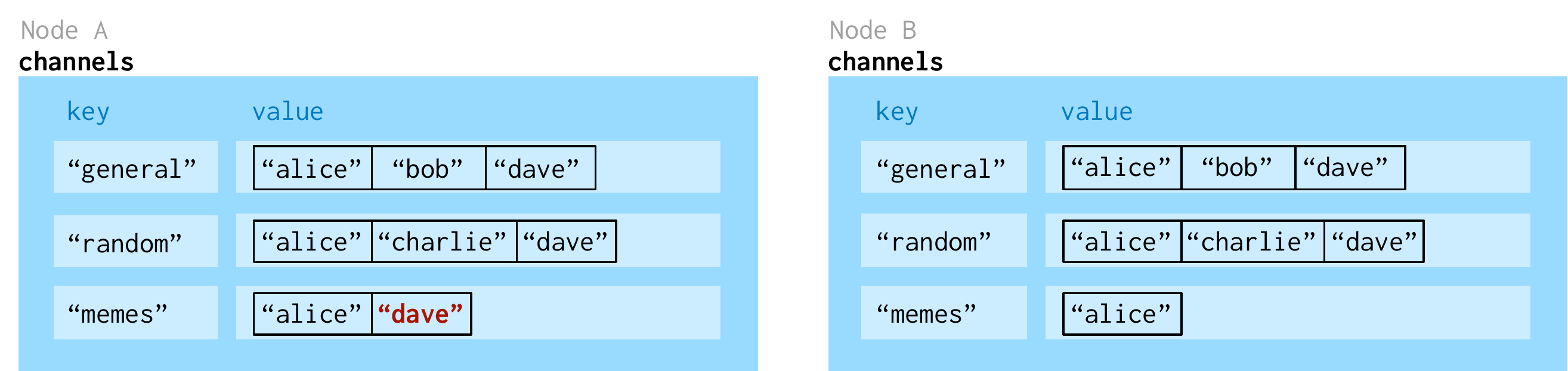}
\caption{Replicas $A$ and $B$ after executing the example program shown in Figure~\ref{fig:pseudo}. Because the operations from Figure~\ref{fig:pseudo} do not commute, $A$ and $B$ end up with different values for \texttt{channels}.}
\label{fig:dictionaries}
\end{figure}

Suppose we wish to add a higher-order map operation $\msf{homap}$, which applies a function to every value currently present in the map. This operation could be used to add a user \texttt{dave} to every channel at node $B$.  In Figure \ref{fig:pseudo}, we run into the issue that this operation does not commute with operations that initialize the value at a key: node $B$ processes the higher-order map operation before the operation that initializes channel \texttt{memes}, so it does not add \texttt{dave} to \texttt{memes}, while node $A$ processes the operations in the opposite order, so it does add \texttt{dave} to \texttt{memes}, as shown in Figure~\ref{fig:dictionaries}.   Hence it is not obvious how to add such an operation to a map CRDT, and as a result, existing map designs do not include one.

Our goal in this paper is to make it easy to add operations like this to CRDTs, even when they do not commute with existing operations.  More generally, we aim to \textit{compose} op-based CRDTs, combining their sets of operations while handling conflicts between them in a uniform way.  One can then add operations to an existing CRDT by composing it with another CRDT that only implements the new operations.

We do so by presenting a new CRDT construction technique, the \textit{semidirect product of op-based CRDTs}, which composes two op-based CRDTs into a new CRDT with both of their operations.  We demonstrate through numerous examples that the semidirect product can be used both to create novel CRDTs through composition and to decompose existing CRDTs into simpler components.  Our novel CRDT designs include maps with a higher-order map operation, sequence types with reverse or range removal operations, and an integer register supporting addition and multiplication operations.  These examples widen the range of data types available as CRDTs without substantial design effort.  The existing CRDTs we decompose include sets, flags, and resettable CRDTs.  These examples help explain seemingly ad-hoc CRDT designs by decomposing them into simpler parts, many of which are commutative data types, with the semidirect product handling conflicts between non-commuting concurrent operations in a uniform way.

We further demonstrate the semidirect product's generality by giving a criterion for when an op-based CRDT designed using an existing general model \cite{pure_op_based_crdts_extended} can be decomposed as a semidirect product of simpler CRDTs, and we show that it applies in many cases.

Briefly, the semidirect product works as follows.  Conflicts between non-commuting concurrent operations from the two CRDTs are handled according to an \textit{arbitration order}, which specifies that operations from the first CRDT should be applied before concurrent operations from the second CRDT.  Since directly reordering operations in this way is not always defined, we instead make use of a transformation function, which transforms operations from the first CRDT to take into account concurrent operations from the second CRDT.  Our construction can thus be viewed as a restricted kind of operational transformation \cite{ot_ressel}, an alternative method of constructing replicated data types that is often contrasted with CRDTs, as we discuss in Section \ref{sec:ot}.  In comparison with general operational transformation, the semidirect product has reduced complexity, thus avoiding CRDT proponents' main criticism of operational transformation.

The semidirect product of CRDTs is named after the semidirect product of groups, a construction from abstract algebra that inspired their design, as described in the appendix.

A shortened version of this paper will appear at the PaPoC Workshop 2020 \cite{papoc}.

\section{Background on Op-Based CRDTs}
\label{sec:background}
\begin{mynot}
We use $f: S \pra T$ to denote that $f$ is a partial function from $S$ to $T$, i.e., a function defined on a subset of $S$.  We write $f(s) = \bot$ to indicate that $f$ is not defined on $s$.
\end{mynot}

As mentioned above, an op-based CRDT is a replicated data type in which replicas convert sequential data type operations into messages that they broadcast to other replicas.  Replicas receive these messages and apply them to their states in causal order (defined below), in such a way that concurrent messages commute.
\begin{mydef}
The \textit{causal order} is the partial order $\prec$ on messages defined by the transitive closure of the rule: $m_1 \prec m_2$ if the replica that sent $m_2$ did so after receiving $m_1$, or if $m_1$ and $m_2$ were sent by the same replica and $m_1$ was sent before $m_2$.  Two messages $m_1, m_2$ are \textit{concurrent} if $m_1 \nprec m_2$ and $m_2 \nprec m_1$.  The requirement that replicas receive messages in causal order means that a replica should not receive a message $m_2$ until after it has received all messages $m_1 \prec m_2$.
\end{mydef}

A simple example is the op-based counter CRDT.  This has state space $\Z$ with operations $\msf{add}(n)$ for $n \in \Z$, acting as $\msf{add}(n): \sigma \mapsto n + \sigma$.  These operations naturally commute.

More complicated CRDTs, such as set CRDTs, attach extra metadata to states, to handle the fact that their operations do not naturally commute.  Additionally, instead of sending operations directly in their messages to other replicas, they may modify them or attach metadata.

Formally, we adopt the following definition of an op-based CRDT, based on that of Shapiro et~al.\ \cite{crdt_survey_2011} but with notation more similar to \cite[\S 3]{pure_op_based_crdts_extended}.
\begin{mydef}
\label{def:crdt}
An \emph{op-based CRDT} is a tuple $(\Sigma, \sigma^0, \msf{prepare}, \msf{effect}, \msf{eval})$ of the form given in Figure \ref{fig:op-based}, such that:
\begin{enumerate}[(i)]
   \item For all $m = \msf{prepare}(o, \sigma, r)$, $m \cdot \sigma \neq \bot$
   \item For all $\sigma \in \Sigma$ and all messages $m_1, m_2$, if $\sigma$ can appear as a replica state and $m_1, m_2$ can appear as concurrent messages in an execution of Algorithm \ref{alg:crdt_use}, and $m_1 \cdot \sigma \neq \bot$ and $m_2 \cdot \sigma \neq \bot$, then
    \[
    m_1 \cdot (m_2 \cdot \sigma) = m_2 \cdot (m_1 \cdot \sigma) \neq \bot.
    \]
\end{enumerate}
\end{mydef}

\begin{figure}
\begin{tabular}{ccp{4.9cm}}
$\Sigma$ & : & Set of states \\
$\msf{prepare}(o, \sigma, r)$ & : & Prepares a \textit{message} $m$ given an operation $o$ by replica $r$ in state $\sigma$ \\
$\msf{effect}(m, \sigma)$ & : & Partial function that applies a prepared message $m$ to a state $\sigma$, returning the resulting state \\
$m \cdot \sigma$ & : & Abbreviation for $\msf{effect}(m, \sigma)$ when the CRDT is clear from context \\
$\msf{eval}(q, \sigma)$ & : & Read-only evaluation of a query $q$ on a state $\sigma$
\end{tabular}
\caption{Components of an op-based CRDT.}
\label{fig:op-based}
\end{figure}

The definition ensures that operations commute if they could be issued concurrently.

\begin{myex}
We can formalize the op-based counter CRDT as $\Sigma = \Z$, $\msf{prepare}(\msf{add}(n), \sigma, r) = \msf{add}(n)$, $\msf{effect}(\msf{add}(n), \sigma) = n + \sigma$, and $\msf{eval}(\msf{value}, \sigma) = \sigma$.
\end{myex}

% We frequently use the abbreviation $m \cdot \sigma := \msf{effect}(m, \sigma)$ when the CRDT is clear from context.

Algorithm \ref{alg:crdt_use} formalizes the use of an op-based CRDT by a group of replicas.  Initially, all replicas are in the initial state $\sigma^0$.  At any time, a replica can issue an operation $o$, causing a message $m$ to be prepared and broadcast to all replicas.  All replicas apply received messages to their state in causal order using $\msf{effect}$.  Replicas can also be queried to return external information about their state.  For instance, a set CRDT could have a query to return the elements of the set based on the internal metadata-enhanced state $\sigma$.

\begin{algorithm}[ht!]
\begin{algorithmic}[1]
\State \textbf{state} $\sigma \in \Sigma$, initially $\sigma^0$
\State \textbf{on} $\msf{operation}(o)$:
\Indent
  \State $m \gets \msf{prepare}(o, \sigma, r)$
  \State $\sigma \gets \msf{effect}(m, \sigma)$
  \State Broadcast $m$ to other replicas
\EndIndent
\State \textbf{on} $\msf{receive}(m)$:
\Indent
  \State $\sigma \gets \msf{effect}(m, \sigma)$
\EndIndent
  \State \textbf{on} $\msf{query}(q)$:
\Indent
  \State \Return $\msf{eval}(q, \sigma)$
\EndIndent
\end{algorithmic}
\caption{Distributed algorithm describing the use of an op-based CRDT by a replica $r$, based on \cite[Algorithm 1]{pure_op_based_crdts_extended} \cite{crdt_survey_2011}.  Messages are assumed to be received in causal order.%, i.e., if a replica sent $m'$ after sending or receiving $m$, then $m$ should be delivered before $m'$ to all replicas.
}
\label{alg:crdt_use}
\end{algorithm}

Properties (i) and (ii) of an op-based CRDT imply \textit{eventual consistency}: two replicas that have received the same messages end up in the same state $\sigma$ when using Algorithm \ref{alg:crdt_use}, even if they receive concurrent messages in different orders \cite[Proposition 2.2]{crdt_survey_2011}.

The main advantage of op-based CRDTs over strongly consistent data types, like data in traditional databases, is that causally ordered message delivery can be enforced without coordination between replicas.  Thus replicas can immediately apply changes to their own copies of the data, and they can send messages without requiring a costly consensus protocol to put all messages in a consistent total order \cite{total_order_survey}.

\begin{myrmk}
One way to enforce causally ordered delivery in an implementation of Algorithm \ref{alg:crdt_use} is to use \textit{vector clocks} \cite{fidge, mattern}, which are functions from replica ids to $\N$.  Each replica maintains a vector clock $t$ such that $t(r)$ is the number of messages it has received from replica $r$.  When sending a message $m$, a replica increments its own entry in $t$ and attaches a copy of $t$ to $m$ as metadata (its \textit{timestamp}).  Thus $m_1 \preceq m_2$ if and only if their corresponding timestamps $t_1, t_2$ satisfy $t_1 \leq t_2$, and they are concurrent if and only if $t_1 \nleq t_2$ and $t_2 \nleq t_1$.  To enforce causally ordered delivery, we wait to deliver a message $m$ from a replica $r_1$ with timestamp $t$ to another replica $r_2$ until $r_2$'s vector clock $t_2$ satisfies: $t_2(r_1) = t(r_1) - 1$ and for all $r \neq r_1$, $t_2(r) \ge t(r)$.  Besides enforcing causally ordered delivery, some CRDT algorithms explicitly include vector clocks in messages so that $\msf{effect}$ can query the causal order.  Our semidirect product construction does this to determine when messages are concurrent.
\end{myrmk}

%\begin{myrmk}
%\label{rmk:def}
%In the original definition by Shapiro et~al.\ \cite{crdt_survey_2011}, hypothesis (ii) is weaker in that it only requires commutativity when $m_1$ and $m_2$ could be issued concurrently in some execution of Algorithm \ref{alg:crdt_use} and when $\sigma$ is reachable in some execution.  That weaker hypothesis is inconvenient here because in the semidirect product construction, we combine operations from two CRDTs, potentially allowing concurrent messages and reachable states that are not possible in either CRDT alone.  %One can recover the original definition by restricting $\Sigma$ to be the subset of reachable states, and by setting $\msf{effect}$ to $\bot$ whenever it is called on a message not possible in some execution.  %Formulating an assumption that describes those cases directly would be complicated.  Instead, our construction assumes that $\msf{effect}(m, \sigma) \neq \bot$ for appropriate messages $m$ and states $\sigma$, and then commutativity follows as needed from (ii).
%\end{myrmk}

\section{Semidirect Products}

\subsection{Motivating Example}
\label{sec:example}
To motivate the semidirect product construction, suppose we wish to construct an integer register CRDT supporting addition and multiplication operations.  Specifically, for $n \in \Z$, we want to allow operations $\msf{add}(n): \sigma \mapsto n + \sigma$ and $\msf{mult}(n): \sigma \mapsto n \times \sigma$.  The $\msf{add}$ operations alone form a CRDT because they naturally commute, and likewise for the $\msf{mult}$ operations, but they do not commute with each other.  Because of this, such a CRDT was not previously known, despite the simplicity of its interface.

As a first attempt, let us dictate that in the face of concurrent $\msf{add}$ and $\msf{mult}$ operations, a replica applies all of the $\msf{add}$ operations first, followed by all of the $\msf{mult}$ operations.  For example, starting in state 1, if two replicas concurrently issue operations $\msf{add}(1)$ and $\msf{mult}(3)$, then regardless of the order in which they receive these operations, all replicas compute the final state as $\msf{mult}(3) \cdot (\msf{add}(1) \cdot 1) = 6$.  We call this order of operations the \textit{arbitration order}.  Non-concurrent operations should continue to be applied in the order they were generated, i.e., in causal order.

Unfortunately, this approach is not well-defined in general.  For example, consider the following scenario:
\[
\begindc{\commdiag}[50]
\obj(5,7){$A$}
\obj(5,00){$B$}
\obj(10,7)[A1]{1}
\obj(20,7)[A2]{2}
\obj(30,7)[A3]{3}
\obj(40,7)[A4]{?}
\obj(50,7)[A5]{?}
\obj(10,00)[B1]{1}
\obj(20,00)[B2]{3}
\obj(30,00)[B3]{7}
\obj(40,00)[B4]{?}
\obj(50,00)[B5]{?}

\mor{A1}{A2}{$\msf{mult}(2)$}
\mor{A2}{A3}{$\msf{add}(1)$}
\mor{A3}{A4}{}[\atleft, \dotArrow]
\mor{A4}{A5}{}[\atleft, \dotArrow]
\mor{B1}{B2}{$\msf{mult}(3)$}[\atright, \solidarrow]
\mor{B2}{B3}{$\msf{add}(4)$}[\atright, \solidarrow]
\mor{B3}{B4}{}[\atright, \dotArrow]
\mor{B4}{B5}{}[\atright, \dotArrow]

\mor{A2}{B4}{}
\mor{A3}{B5}{}
\mor{B2}{A4}{}
\mor{B3}{A5}{}
\enddc
\]
Here replicas $A$ and $B$ both start in state 1.  Replica $A$ issues operations $\msf{mult}(2)$ followed by $\msf{add}(1)$, while $B$ concurrently issues operations $\msf{mult}(3)$ followed by $\msf{add}(4)$.  After receiving each others' messages, the arbitration order dictates that each replica should apply $\msf{add}(1)$ before the concurrent operation $\msf{mult}(3)$, and $\msf{add}(4)$ before $\msf{mult}(2)$.  Additionally, to respect the causal order, replicas should apply $\msf{mult}(2)$ before $\msf{add}(1)$ and $\msf{mult}(3)$ before $\msf{add}(4)$.  However, this creates a loop.

As a second attempt, observe that applying $\msf{add}(m)$ followed by $\msf{mult}(n)$ is the same as applying $\msf{mult}(n)$ followed by $\msf{add}(nm)$, by the distributive property.  Thus in the face of concurrent operations $\msf{add}(m)$ and $\msf{mult}(n)$, instead of requiring all replicas to apply $\msf{add}(m)$ before $\msf{mult}(n)$, we can equivalently require replicas that first received $\msf{mult}(n)$ to apply $\msf{add}(nm)$ when they later receive $\msf{add}(m)$.  Replicas that instead received $\msf{add}(m)$ before $\msf{mult}(n)$ apply both operations normally.  This rule is inspired by the semidirect product of groups (see Appendix \ref{sec:algebra}), in which non-commuting operations from two groups can be transposed so long as we transform the first group's operations by the second's.

Using this rule in the above scenario, both replicas end up in state $17$:
\[
\begindc{\commdiag}[50]
\obj(5,7){$A$}
\obj(5,00){$B$}
\obj(10,7)[A1]{1}
\obj(20,7)[A2]{2}
\obj(30,7)[A3]{3}
\obj(40,7)[A4]{9}
\obj(50,7)[A5]{17}
\obj(10,00)[B1]{1}
\obj(20,00)[B2]{3}
\obj(30,00)[B3]{7}
\obj(40,00)[B4]{14}
\obj(50,00)[B5]{17}

\mor{A1}{A2}{$\msf{mult}(2)$}
\mor{A2}{A3}{$\msf{add}(1)$}
\mor{A3}{A4}{$(\msf{mult}(3))$}[\atleft, \dotArrow]
\mor{A4}{A5}{$(\msf{add}(8))$}[\atleft, \dotArrow]
\mor{B1}{B2}{$\msf{mult}(3)$}[\atright, \solidarrow]
\mor{B2}{B3}{$\msf{add}(4)$}[\atright, \solidarrow]
\mor{B3}{B4}{$(\msf{mult}(2))$}[\atright, \dotArrow]
\mor{B4}{B5}{$(\msf{add}(3))$}[\atright, \dotArrow]

\mor{A2}{B4}{}
\mor{A3}{B5}{}
\mor{B2}{A4}{}
\mor{B3}{A5}{}
\enddc
\]
Here $A$ applies $\msf{add}(8)$ in place of $\msf{add}(4)$ because of its concurrent $\msf{mult}(2)$ operation, and $B$ applies $\msf{add}(3)$ in place of $\msf{add}(1)$ because of its concurrent $\msf{mult}(3)$ operation.

Unlike our first attempt, this approach generalizes to all situations via the rule: upon receiving an operation $\msf{add}(m)$, instead of applying it directly, a replica applies the operation $\msf{add}(n_1n_2 \cdots n_km)$, where $\msf{mult}(n_1), \msf{mult}(n_2), \dots, \msf{mult}(n_k)$ are all of the $\msf{mult}$ operations concurrent to $\msf{add}(m)$ that the replica had previously applied.

\subsection{Construction}
\label{sec:construction}
We obtain our semidirect product construction by generalizing the above approach.  Let $\mc{C}_1 = (\Sigma, \sigma^0, \msf{prepare}_1, \msf{effect}_1, \msf{eval})$ and $\mc{C}_2 = (\Sigma, \sigma^0, \msf{prepare}_2, \msf{effect}_2, \msf{eval})$ be two op-based CRDTs sharing the same state space, initial state, and $\msf{eval}$ function, but with disjoint sets of operations and prepared messages. 

In addition to properties (i) and (ii) of an op-based CRDT, we assume that $\mc{C}_1$ and $\mc{C}_2$ satisfy the following property (iib), which is a strengthening of property (ii).  Here by an \textit{author} of a message $m$, we mean any replica $r$ such that $m = \msf{prepare}(o, \sigma, r)$ for some $o$ and $\sigma$, and we say two messages $m_1, m_2$ \textit{can have different authors} if we can write $m_1 = \msf{prepare}(o_1, \sigma_1, r_1)$ and $m_2 = \msf{prepare}(o_2, \sigma_2, r_2)$ with $r_1 \neq r_2$.
\begin{itemize}
    \item[(iib)] For all $\sigma \in \Sigma$ and all messages $m_1, m_2$ that can have different authors, if $m_1 \cdot \sigma \neq \bot$ and $m_2 \cdot \sigma \neq \bot$, then
    \[
    m_1 \cdot (m_2 \cdot \sigma) = m_2 \cdot (m_1 \cdot \sigma) \neq \bot.
    \]
\end{itemize}
That is, instead of requiring $m_1$ and $m_2$ to commute when applied to $\sigma$ only if they all can appear in some execution of Algorithm \ref{alg:crdt_use} with $m_1$ and $m_2$ concurrent, we require commutativity whenever $m_1$ and $m_2$ are both defined on $\sigma$ and can have different authors.  This strengthened version is necessary because in the semidirect product construction, we combine operations from $\mc{C}_1$ and $\mc{C}_2$, potentially allowing states and concurrent messages that are not possible in either CRDT alone.

Let $M_1$ and $M_2$ be the sets of prepared messages for $\mc{C}_1$ and $\mc{C}_2$, respectively.  We wish to construct a CRDT combining the operations of $\mc{C}_1$ and $\mc{C}_2$, with $M_1$ coming before $M_2$ in the arbitration order, so that in the face of concurrent messages from the two CRDTs, those from $\mc{C}_1$ are effectively applied first.

To make this possible, we assume we are given a partial \textit{action} of $M_2$ on $M_1$, i.e., a partial function $\act: M_2 \times M_1 \pra M_1$.  We write $\act$ as in infix, e.g., $m_2 \act m_1$.  This action will be used to transform a message $m_1 \in M_1$ that is received after a concurrent message $m_2 \in M_2$, to give the same result as if $m_1$ had been applied before $m_2$ like it should have been.

\begin{myex}
In the example of the previous section, $\mc{C}_1$ is an integer register with operations $\msf{add}(m)$, $\mc{C}_2$ is an integer register with operations $\msf{mult}(n)$, and $\act$ is given by
\[
\msf{mult}(n) \act \msf{add}(m) = \msf{add}(nm).
\]
\end{myex}

For the semidirect product to work, we assume:
\begin{itemize}
  \item (reordering) For all $\sigma \in \Sigma$ and all messages $m_1 \in M_1$, $m_2 \in M_2$ that can have different authors (i.e., we can write $m_1 = \msf{prepare}_1(o_1, \sigma_1, r_1)$ and $m_2 = \msf{prepare}_2(o_2, \sigma_2, r_2)$ with $r_1 \neq r_2$), if $m_1 \cdot \sigma \neq \bot$ and $m_2 \cdot \sigma \neq \bot$, then $m_2 \act m_1 \neq \bot$ and
  \[
  m_2 \cdot (m_1 \cdot \sigma) = (m_2 \act m_1) \cdot (m_2 \cdot \sigma) \neq \bot.
  \]
  This ensures that in case of concurrent messages $m_1$ and $m_2$ acting on a state $\sigma$, the intended final state $m_2 \cdot (m_1 \cdot \sigma)$ is defined, and it can be computed either by applying $m_1$ followed by $m_2$ or by applying $m_2$ followed by $m_2 \act m_1$.
%   \item (commutativity preservation) For all $\sigma \in \Sigma$, $m_1, m_1' \in M_1$, and $m_2 \in M_2$, if $m_1 \cdot \sigma \neq \bot$, $m_1' \cdot \sigma \neq \bot$, and $m_2 \cdot \sigma \neq \bot$, then
%   \[
%   (m_2 \act m_1) \cdot (m_1' \cdot \sigma) = m_1' \cdot (m_2 \act m_1).
%   \]
%   This ensures that acting by $m_2$ preserves the fact that $m_1$ commutes with concurrent $M_1$ messages.
  \item (action commutes) For all $m_1 \in M_1$ and all $m_2, m_2' \in M_2$, if $m_1, m_2, m_2'$ can have mutually different authors, then
  \[
  m_2 \act (m_2' \act m_1) = m_2' \act (m_2 \act m_1).
  \]
  This ensures that concurrent $M_2$ messages commute in their action on $M_1$ messages via $\act$, just like how they commute when applied to states in $\Sigma$.
%   \item (action commutes) For all $m_1 \in M_1$ and all $m_2 = \msf{prepare}_2(o_2, \sigma_2, i_2)$ and $m_2' = \msf{prepare}_2(o_2', \sigma_2', i_2')$ with $i_2 \neq i_2'$,
%   \[
%   m_2 \act (m_2' \act m_1) = m_2' \act (m_2 \act m_1).
%   \]
%   This ensures that concurrent $M_2$ messages commute in their action via $\act$, just like how they commute in their action on $\Sigma$.
  \item (preserves authors) For all $m_1 = \msf{prepare}_1(o_1, \sigma_1, r_1)$ and $m_2 \in M_2$, if $m_2 \act m_1 \neq \bot$, then $m_2 \act m_1 = \msf{prepare}_1( o_1', \sigma_1', r_1)$ for some $o_1', \sigma_1'$.  This ensures that $\act$ preserves message authors, hence messages $m_1' \in M_1$ that are required to commute with $m_1$ by property (iib) of an op-based CRDT are also required to commute with $m_2 \act m_1$.\footnote{In practice, $m_2 \act m_1$ sometimes exists only for the purposes of the semidirect product and is not an output of $\msf{prepare}_1$, in which case we formally define its author to be that of $m_1$, for the purposes of property (iib) and the above assumptions.}
\end{itemize}

\begin{myex}
In the example of the previous section, assumption (reordering) holds by the distributive property:
\begin{align*}
\msf{mult}(n) \cdot (\msf{add}(m) \cdot \sigma) = n(m + \sigma) = nm + n\sigma \\ = \msf{add}(mn) \cdot (\msf{mult}(n) \cdot \sigma).
\end{align*}
Assumption (action commutes) holds by commutativity of multiplication, and (preserves authors) is trivial as any message can have any author.
\end{myex}

\begin{mydef}
Given the assumptions above, the \emph{semidirect product of $\mc{C}_1$ and $\mc{C}_2$ with respect to $\act$} is the op-based CRDT $\mc{C}_1 \rtimes_\act \mc{C}_2 = (\Sigma_\rtimes, (\sigma^0, \mathbf{0}, \emptyset), \msf{prepare}_\rtimes, \msf{effect}_\rtimes, \msf{eval}_\rtimes)$ with components defined in Algorithm \ref{alg:construction}.
\end{mydef}
\begin{algorithm}[ht!]
\begin{algorithmic}[1]
\State $\mc{T}$: set of timestamps in the form of vector clocks, i.e., functions from replica ids to $\N$
%\State $\prec$: causal order on timestamps
\State $\mathbf{0}$: all-0 (initial) timestamp
\State $\Sigma_\rtimes = \Sigma \times \mc{T} \times \mc{P}_{\text{fin}}(M_2 \times \mc{T})$, where $\mc{P}_{\text{fin}}(M_2 \times \mc{T})$ denotes the set of finite subsets of $M_2 \times \mc{T}$
\State \textbf{function} $\msf{prepare}_\rtimes(o, (\sigma, t, H), r)$:
  \Indent
  \State $t' \gets t[r \mapsto t(r) + 1]$
  \If{$o$ is a $\mc{C}_1$ operation} \State \Return $(\msf{prepare}_1(o, \sigma, r), t')$
  \Else \Comment{$o$ is a $\mc{C}_2$ operation}
    \State \Return $(\msf{prepare}_2(o, \sigma, r), t')$
  \EndIf
  \EndIndent
\State \textbf{function} $\msf{effect}_\rtimes((m, t'), (\sigma, t, H))$:
  \Indent
  \If{$m \in M_2$}
    \State \Return $(\msf{effect}_2(m, \sigma), \max(t, t'), H \cup \{(t', m)\})$
  \Else \Comment{$m \in M_1$}
    \State $(l_1, u_1), \dots, (l_k, u_k) \gets $ all $(l, u) \in H$ with $u$ concurrent to $t'$, in (any) causal order \label{alg_line:act}
    \State $m_{\text{act}} \gets l_k \act (l_{k-1} \act (\cdots (l_1 \act m)\cdots))$ \label{alg_line:m_act}
    \State \Return $(\msf{effect}_1(m_{\text{act}}, \sigma), \max(t, t'), H)$
  \EndIf
  \EndIndent
\State \textbf{function} $\msf{eval}_\rtimes(q, (\sigma, t, H))$: \Return $\msf{eval}(q, \sigma)$
\end{algorithmic}
\caption{Components of the semidirect product $\mc{C}_1 \rtimes_\act \mc{C}_2$.  In $\msf{effect}_\rtimes$, if any portion of the output is $\bot$, then we set the whole output to be $\bot$.}
\label{alg:construction}
\end{algorithm}

The semidirect product functions as follows.  A state $(\sigma, t,  H)$ corresponds to an internal state $\sigma$ with current timestamp $t$ and history of $M_2$ messages $H$.  To apply a message $m \in M_2$, $\msf{effect}_\rtimes$ applies $m$ to $\sigma$ and also stores $m$ in $H$ together with its timestamp.  To apply a message $m \in M_1$, $\msf{effect}_\rtimes$ first acts on $m$ by all concurrent messages in $H$, i.e., all concurrent $M_2$ messages that have already been applied to the state, using timestamps to determine which messages are concurrent.  These concurrent messages $(l_1, u_1), \dots, (l_k, u_k)$ act on $m$ in causal order, by which we mean any ordering such that for all $j < j'$, $u_j$ is not causally prior to $u_{j'}$.  The resulting message $m_{\text{act}}$ is then applied to $\sigma$.

\subsection{Correctness}
\begin{mythm}
\label{thm:correctness}
$\mc{C}_1 \rtimes_\act \mc{C}_2$ is an op-based CRDT in the sense of Definition \ref{def:crdt}.
\end{mythm}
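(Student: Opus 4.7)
The plan is to verify properties (i) and (ii) of Definition \ref{def:crdt} directly from the construction in Algorithm \ref{alg:construction}, using the assumptions (reordering), (action commutes), and (preserves authors), plus the strengthened property (iib) inherited from $\mc{C}_1$ and $\mc{C}_2$. Property (i) is the easy warm-up: at the moment $\msf{prepare}_\rtimes$ is called at replica $r$ in state $(\sigma, t, H)$, the new timestamp $t' = t[r \mapsto t(r)+1]$ strictly dominates every timestamp in $H$ (since $r$ has causally received all of $H$'s entries), so no element of $H$ is concurrent to $t'$. Hence the list on line \ref{alg_line:act} is empty, $m_{\text{act}} = m$, and $\msf{effect}_\rtimes$ reduces to $\msf{effect}_1$ or $\msf{effect}_2$ on $\sigma$, which is non-$\bot$ by property (i) of the underlying CRDT.

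For property (ii), I would fix a reachable state $(\sigma, t, H)$ and concurrent messages $(m_1, t_1), (m_2, t_2)$ with both effects non-$\bot$, and case-split on which $M_i$ each belongs to. If both lie in $M_2$, neither triggers the action branch, so the result reduces to commutativity of $\msf{effect}_2$ on $\sigma$ (property (iib) of $\mc{C}_2$, applicable because concurrent messages have distinct authors), together with the trivial commutativity of set union for $H$ and of $\max$ for timestamps. If both lie in $M_1$, then $H$ is untouched by either application, so each message is acted on by exactly the $M_2$-messages in the \emph{original} $H$ that are concurrent to its timestamp; the two resulting acted-upon messages $m_1^{\text{act}}$, $m_2^{\text{act}}$ have the same authors as $m_1, m_2$ by (preserves authors), and hence commute on $\sigma$ by property (iib) of $\mc{C}_1$.

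The real work is the mixed case, say $m_1 \in M_1$ and $m_2 \in M_2$. Order 1 (apply $m_1$, then $m_2$) yields $\bigl(\msf{effect}_2(m_2, \msf{effect}_1(m_1^{\text{act}}, \sigma)),\ \max(t,t_1,t_2),\ H\cup\{(t_2,m_2)\}\bigr)$, where $m_1^{\text{act}} = l_k \act(\cdots(l_1 \act m_1)\cdots)$ for the concurrents $(l_i,u_i) \in H$ with respect to $t_1$. Order 2 (apply $m_2$, then $m_1$) stores $m_2$ in $H$ first, so when $m_1$ is applied the concurrent list additionally contains $(m_2,t_2)$, giving some $m_1^{\text{act}'}$. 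The heart of the proof is to show $m_1^{\text{act}'} = m_2 \act m_1^{\text{act}}$: by (action commutes) applied repeatedly — each adjacent swap being legitimized by the fact that any two concurrent messages come from distinct authors — we can permute $m_2$ to the outermost position of the action stack regardless of where causal order originally placed it. Once this identity is in hand, (reordering) gives $\msf{effect}_2(m_2, \msf{effect}_1(m_1^{\text{act}}, \sigma)) = \msf{effect}_1(m_2 \act m_1^{\text{act}}, \msf{effect}_2(m_2, \sigma))$, matching order 2's state component; the timestamp and history components agree trivially.

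The main obstacle I anticipate is the bookkeeping around authors in the mixed case: to invoke (action commutes) on the triple $(m_1^{(j)}, l_i, m_2)$ obtained after $j$ prior actions, I need to know that $l_i$, $m_2$, and the current acted-upon message all have mutually distinct authors, which requires a short inductive argument using (preserves authors) (so the author of the intermediate message stays equal to that of $m_1$) together with the observation that any two messages in $H \cup \{m_2\}$ that are concurrent to $m_1$ must pairwise have distinct authors whenever (action commutes) is needed. I would package this as a small lemma — ``applying $\act$ by a finite set of $M_2$-messages is independent of the order used'' — and then the rest of the theorem follows by the case analysis above.
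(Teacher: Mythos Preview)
Your proposal is correct and follows the paper's proof essentially verbatim: the same well-definedness lemma for $m_{\text{act}}$ via (action commutes), the same argument for property (i) using that no history entry is concurrent to a freshly prepared timestamp, and the same three-way case split for property (ii) with (preserves authors) feeding into (iib) in the pure cases and into (reordering) in the mixed case. The one place you work slightly harder than the paper is the identity $m_1^{\text{act}'} = m_2 \act m_1^{\text{act}}$: since Algorithm~\ref{alg:crdt_use} delivers messages in causal order, the timestamp $t_2$ is automatically maximal in $H \cup \{(m_2,t_2)\}$, so you may simply \emph{choose} the causal ordering on line~\ref{alg_line:act} that places $m_2$ last---no separate permutation argument is needed once your order-independence lemma is in hand.
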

\begin{proof}
First, note that $(m, t') \cdot (\sigma, t, H)$ is the same regardless of which causal ordering we choose for $(l_1, u_1), \dots, (l_k, u_k)$ on line \ref{alg_line:act}.  %(Recall that $(m, t') \cdot (\sigma, t, H)$ abbreviates $\msf{effect}((m, t'), (\sigma, t, H))$.)
Indeed, if $(l_1', u_1'), \dots, (l_k', u_k')$ is another causal ordering of these messages, then we can go from one sequence to the other via a sequence of transpositions $$(l_{j_1}, u_{j_1}), (l_{j_2}, u_{j_2}) \mapsto (l_{j_2}, u_{j_2}), (l_{j_1}, u_{j_1})$$ with $u_{j_1}$ concurrent to $u_{j_2}$.  Since $u_{j_1}$, $u_{j_2}$, and $t$ are mutually concurrent, $l_{j_1}$, $l_{j_2}$, and $m$ must have had mutually different authors in the execution leading to this state.  Hence by assumption (action commutes), such a transposition does not change $m_{\text{act}}$.

We now verify the CRDT properties.  Recall that we use $m \cdot \sigma$ as an abbreviation for $\msf{effect}(m, \sigma)$ when the relevant CRDT is clear from context.
\begin{enumerate}[(i)]
  \item Let $(m, t') = \msf{prepare}_\rtimes(o, (\sigma, t, H), r)$.  If $m \in M_2$, then $m \cdot \sigma \neq \bot$ by property (i) of $\mc{C}_2$, so $(m, t') \cdot (\sigma, t, H) \neq \bot$.  If instead $m \in M_1$, then there are no $(l, u) \in H$ with $u$ concurrent to $t'$, since $t'$ comes causally after all prior timestamps.  Hence $m_{\text{act}} = m$, so $m_{\text{act}} \cdot \sigma = m \cdot \sigma \neq \bot$ by property (i) of $\mc{C}_1$.  Thus $(m, t') \cdot (\sigma, t, H) \neq \bot$.
  \item Let $(\sigma, t, H) \in \Sigma_\rtimes$ be a state and $(l, s')$, $(m, t')$ be messages such that $(\sigma, t, H)$ can appear as a replica state and $(l, s')$ and $(m, t')$ can appear as concurrent messages in an execution of Algorithm \ref{alg:crdt_use}, and $(l, s') \cdot (\sigma, t, H) \neq \bot$ and $(m, t') \cdot (\sigma, t, H) \neq \bot$.  We need to show that
  \begin{align*}
  (l, s') \cdot ((m, t') \cdot (\sigma, t, H)) = (m, t') \cdot ((l, s') \cdot (\sigma, t, H)) \neq \bot.
  \end{align*}
\textbf{Case $l, m \in M_1$:} Let $l_{\text{act}}$ and $m_{\text{act}}$ be as in the definitions of $(l, s') \cdot (\sigma, t, H)$ and $(m, t') \cdot (\sigma, t, H)$, respectively.  By assumption (preserves authors) and property (iib) of $\mc{C}_1$, $l_{\text{act}}$ and $m_{\text{act}}$ commute when applied to $\sigma$, and the claim follows.
  
\textbf{Case $l, m \in M_2$:} By property (iib) of $\mc{C}_2$, $l$ and $m$ commute when applied to $\sigma$, and the claim follows.
    
\textbf{Case $l \in M_1, m \in M_2$:} Let $l_{\text{act}}$ and $l_{\text{act}}'$ be as in the definitions of $(l, s') \cdot (\sigma, t, H)$ and $(l, s') \cdot ((m, t') \cdot (\sigma, t, H))$, respectively.  Since $s'$ is concurrent to $t'$, $(m, t')$ appears among the operations used to compute $l_{\text{act}}'$. Furthermore, since Algorithm \ref{alg:crdt_use} delivers messages in causal order, $t'$ must be maximal among timestamps in $H$, so we can put $m$ last in the sequence used to define $l_{\text{act}}'$.  Thus $l_{\text{act}}' = m \act l_{\text{act}}$.  Also, $l_{\text{act}} \cdot \sigma \neq \bot$ and $m \cdot \sigma \neq \bot$, and by assumption (preserves authors), $l_{\text{act}}$ and $m$ can have different authors.  Hence by assumption (reordering), $m \cdot (l_{\text{act}} \cdot \sigma) = (m \act l_{\text{act}}) \cdot (m \cdot \sigma) \neq \bot$.  The claim follows.
\end{enumerate}
\end{proof}

\section{Guide to Using Semidirect Products}
\label{sec:tutorial}
In this section, we go through the steps a CRDT designer should use to construct a semidirect product CRDT.  We do so in the context of an example promised in the introduction: adding higher-order map operations to a map CRDT.  Additional examples appear in Section \ref{sec:examples}.

Let $\mc{C}$ be a CRDT and $K$ be a set of keys.  Numerous works \cite{riak_datatypes, antidote, json} define a map CRDT $\msf{map}(K, \mc{C})$ with keys $K$ and values in $\mc{C}$, which is a replicated version of a dictionary. %, whose semantics are as follows.  Assume $\mc{C}$ has a $\msf{reset}$ operation with sequential semantics $\msf{reset} \cdot \sigma = \sigma^0$ for all states $\sigma$ of $\mc{C}$, where $\sigma^0$ is the initial state of $\mc{C}$.  We assume $\sigma^0$ is a special state $\msf{null}$ indicating a key not present in the map, which can only be reached via a $\msf{reset}$ operation.
The states of $\msf{map}(K, \mc{C})$ are partial functions $f$ from $K$ to states of $\mc{C}$ such that all but finitely many values of $f$ are $\bot$.  The operations are of the form $\msf{apply}(k, m)$ for $k \in K$ and $m$ a message of $\mc{C}$, with effect
\[
(\msf{apply}(k, m) \cdot f)(k') = \begin{cases} f(k') &\mbox{if $k' \neq k$} \\ m \cdot f(k) &\mbox{if $k' = k$ and $f(k) \neq \bot$} \\ m \cdot \sigma^0 &\mbox{if $k = k'$ and $f(k) = \bot$.} \end{cases}
\]
That is, $\msf{apply}(k, m)$ applies $m$ to the value at $k$, treating $\bot$ as $\sigma^0$.  Typically, CRDT maps also include an operation to remove a key-value pair, but incorporating such an operation into our construction below is difficult, so we leave it as future work.  %There are two kinds of operations:
%\begin{itemize}
%    \item $\msf{apply}(k, m)$ for $k \in K$ and $m$ a non-$\msf{reset}$ message of $\mc{C}$, which applies $m$ to the value at key $k$
%    \item $\msf{remove}(k)$, which removes the value at key $k$ by applying a $\msf{reset}$ operation to its value.
%\end{itemize}
%To simplify the construction below, we assume the map uses the remove-wins semantics, i.e., $\mc{C}$'s $\msf{reset}$ operation implements the reset-wins semantics, so that $\msf{remove}(k)$ operations cancel concurrent $\msf{apply}(k, m)$ operations.  We leave the observed-remove case to future work.

For our novel CRDT, we wish to add higher-order map operations $\msf{homap}(m)$ to $\msf{map}(K, \mc{C})$, where $m$ is a message of $\mc{C}$, with sequential semantics
\[
(\msf{homap}(m) \cdot f)(k) = \begin{cases} \bot &\mbox{if $f(k) = \bot$} \\ m \cdot f(k) &\mbox{otherwise.} \end{cases}
\]
That is, $\msf{homap}(m)$ applies $m$ to every non-$\bot$ value in the map.\footnote{Strictly speaking, the allowable choices of $m$ should be restricted so that $m \cdot f(k) \neq \bot$ for all $k$, and so that $m$ commutes with messages $m'$ that could appear in a concurrent $\msf{homap}(m')$ or $\msf{apply}(k, m')$ operation.}

\begin{myex}
As a potential use case, recall the example from the introduction of a Slack-like application containing multiple ``channels'' that users can join.  We can store the map from channel names to the set of users in each channel as a set-valued map CRDT.  When a new user $A$ is added to the application, we can use $\msf{homap}(\msf{add}(A))$ to add $A$ to every channel, without needing to send separate messages for each channel.  It is trivial to modify the operation to filter by keys, so that, e.g., $A$ is only added to channels whose names start with ``public''.
\end{myex}

As our first step in constructing a semidirect product implementing both $\msf{apply}$ and $\msf{homap}$ operations, we need to partition the operations into two sets so that we can easily construct CRDTs for each set.  Here the obvious split is into $\msf{apply}$ and $\msf{homap}$ operations.  Thus our first component CRDT is $\msf{map}(K, \mc{C})$, which implements the $\msf{apply}$ operations.  For our second component CRDT, we let $\msf{homap}(K, \mc{C})$ have the same state space as $\msf{map}(K, \mc{C})$ but with $\msf{homap}$ operations only.  Since $\mc{C}$ is a CRDT, concurrent $\msf{homap}(m)$ operations commute with each other, so $\msf{homap}(K, \mc{C})$ is indeed a CRDT.  Also, assuming $\mc{C}$ satisfies the strengthened property (iib) of an op-based CRDT, so do $\msf{map}(K, \mc{C})$ and $\msf{homap}(K, \mc{C})$.

It is worth noting that we cannot add $\msf{homap}(m)$ operations to $\msf{map}(K, \mc{C})$ directly, since they do not always commute with concurrent operations $\msf{apply}(k, m')$.  Indeed, suppose $\msf{apply}(k, m')$ initializes the value at $k$, i.e., changes it from $\bot$ to non-$\bot$.  This does not commute with $\msf{homap}(m)$: if $f(k) = \bot$, then
\[
\left(\msf{homap}(m) \cdot (\msf{apply}(k, m') \cdot f)\right)(k) = m \cdot (m' \cdot \sigma^0),
\]
while
\[
\left(\msf{apply}(k, m') \cdot (\msf{homap}(m) \cdot f)\right)(k) = m' \cdot \sigma^0
\]
because $\msf{homap}(m)$ only applies to non-$\bot$ values.

Hence we are led to consider a semidirect product of $\msf{map}(K, \mc{C})$ and $\msf{homap}(K, \mc{C})$.  To do so, we must first choose an arbitration order between the two CRDTs.  That is, we must choose which operation should be applied first in the case of concurrent $\msf{apply}(k, m')$ and $\msf{homap}(m)$ operations.  Either choice would be reasonable, but we find it more interesting to put $\msf{homap}(m)$ operations last in the arbitration order.  This means that they also apply to keys that are initialized concurrently.  %(The other arbitration order is also possible but less interesting.)

\begin{myex}
In the example above, this semantics ensures that $A$ will also be added to channels that are created concurrently to $A$'s addition, so that they are not left out of any eligible channels.
\end{myex}

Next, we need to choose a partial action $\act$ of $\msf{homap}(m)$ messages on $\msf{map}(K, \mc{C})$ messages satisfying assumption (reordering), i.e., for all map states $f$,
\begin{equation}
\label{reordering_ex}
\msf{homap}(m) \cdot (\msf{map}(k, m') \cdot f) = (\msf{homap}(m) \act \msf{map}(k, m')) \cdot (\msf{homap}(m) \cdot f).
\end{equation}
We focus on this assumption first because it is the most difficult one to satisfy, while assumptions (action commutes) and (preserves authors) are largely technical.  In general, to satisfy this assumption, we often need to modify our component CRDTs $\mc{C}_1$ and $\mc{C}_2$:
\begin{itemize}
    \item Sometimes for messages $m_1$ of $\mc{C}_1$ and $m_2$ of $\mc{C}_2$, there is an obvious partial function $f$ from states to states such that $m_2 \cdot (m_1 \cdot \sigma) = f \cdot (m_2 \cdot \sigma)$ for all states $\sigma$, but $f$ does not correspond to the effect of any message of $\mc{C}_1$.  This is easily resolved by expanding the message set of $\mc{C}_1$ to include $f$.  These ``formal'' messages need not have a corresponding externally visible operation.  Note, however, that we must then ensure that assumption (reordering) also holds for these new messages of $\mc{C}_1$.
    \item Sometimes for messages $m_1$ of $\mc{C}_1$ and $m_2$ of $\mc{C}_2$, it is impossible to satisfy assumption (reordering) because there are states $\sigma \neq \sigma'$ such that $m_2 \cdot (m_1 \cdot \sigma) \neq m_2 \cdot (m_1 \cdot \sigma')$ but $m_2 \cdot \sigma = m_2 \cdot \sigma'$.  This can often be solved by adding extra metadata to states that makes $m_2 \cdot \sigma$ different from $m_2 \cdot \sigma'$.
\end{itemize}

For $\msf{homap}$, observe that if $f(k) = \bot$, then
\[
\left(\msf{homap}(m) \cdot (\msf{apply}(k, m') \cdot f)\right)(k) = m \cdot (m' \cdot \sigma^0) = m' \cdot (m \cdot \sigma^0)
\]
since $m$ and $m'$ commute, while $(\msf{homap}(m) \cdot f)(k) = \bot$ because $\msf{homap}(m)$ only affects initialized values.   Meanwhile, if $f(k) \neq \bot$, then
\[
\left(\msf{homap}(m) \cdot (\msf{apply}(k, m') \cdot f)\right)(k) = m \cdot (m' \cdot f(k)) = m' \cdot (m \cdot f(k)),
\]
while $(\msf{homap}(m) \cdot f)(k) = m \cdot f(k)$.  Thus to satisfy (\ref{reordering_ex}), $\msf{homap}(m) \act \msf{apply}(k, m')$ must act as
\[
((\msf{homap}(m) \act \msf{apply}(k, m')) \cdot g)(k) = \begin{cases} m' \cdot g(k) &\mbox{if $g(k) \neq \bot$} \\ m' \cdot (m \cdot g(k)) &\mbox{if $g(k) = \bot$.} \end{cases}
\]
No message of $\msf{map}(K, \mc{C})$ acts in this way, but as discussed in the first bullet above, we can easily add one that does.  Defining $\act$ on those messages leads us to need more new messages, etc.  Eventually we are led to define the messages of $\msf{map}(K, \mc{C})$ to be of the form $\msf{apply}(k, m)(L)$, where $L$ is a finite sequence of $\mc{C}$ messages.  The effect of such a message on a state $f$ is
\begin{align*}
(\msf{apply}(k, m)(L) \cdot f)(k') = \begin{cases} f(k') &\mbox{if $k' \neq k$} \\ m \cdot f(k) &\mbox{if $k' = k$ and $f(k) \neq \bot$} \\ m \cdot (L \cdot \sigma^0) &\mbox{if $k' = k$ and $f(k) = \bot$,} \end{cases}
\end{align*}
where $L \cdot \sigma^0$ denotes the result of applying all messages in $L$ to $\sigma^0$ in order.  In other words, $\msf{apply}(k, m)(L)$ acts the same as $\msf{apply}(k, m)$, except it also applies $L$ to an uninitialized value.  The original map operations $\msf{apply}(k, m)$ corresponds to the messages $\msf{apply}(k, m)(\emptyset)$.

Now we can define
\[
\msf{homap}(m) \act \msf{apply}(k, m')(L) = \msf{apply}(k, m')(L \cup \{m\}),
\]
where $m$ is appended to the end of $L$.  This trivially satisfies assumptions (action commutes) and (preserves authors), and it satisfies assumption (reordering) as well: both
\[
\left(\msf{homap}(m) \cdot (\msf{apply}(k, m')(L) \cdot f)\right)(k)
\]
and
\[
\left(\msf{apply}(k, m')(L \cup \{m\}) \cdot (\msf{homap}(m) \cdot f)\right)(k)
\]
equal
\[
\begin{cases} m \cdot (m' \cdot f(k)) &\mbox{if $f(k) \neq \bot$} \\ m \cdot (m' \cdot (L \cdot \sigma^0)) &\mbox{if $f(k) = \bot$.} \end{cases}
\]
Thus we get a semidirect product CRDT $\msf{map}(K, \mc{C}) \rtimes_\act \msf{homap}(K, \mc{C})$ with both $\msf{apply}(k, m')$ and $\msf{homap}(m)$ operations, implementing the semantics described above.  In this CRDT, a $\msf{homap}(m)$ operation modifies concurrent $\msf{apply}(k, m')$ operations to ensure that if they initialize the value at $k$, they also apply $m$ to that value.

\section{Examples}
\label{sec:examples}
We now give numerous examples of semidirect products, both constructing novel CRDTs and reproducing the semantics of existing CRDTs.  These examples demonstrate the semidirect product's ability to compose and decompose CRDTs.  In several of the examples, both components are commutative data types (i.e., all messages commute naturally without conflict resolution), in which case the semidirect product handles all conflicts between non-commuting concurrent messages.

% For space reasons, we generally specify the component CRDTs informally and omit checking the semidirect product assumptions.

\subsection{Novel CRDTs}
\paragraph{Sequence with Reverse Operation}
We can use semidirect products to add a $\msf{reverse}$ operation to a sequence CRDT.  A sequence CRDT is a CRDT version of a totally ordered sequence, such as a text string, as would appear in a collaborative text editing application.

Let $\mc{C}$ be the continuous sequence CRDT defined by Shapiro et~al.\ \cite[\S 3.5.2]{crdt_survey_2011}, with $\R$ as the continuum set of identifiers.  In this CRDT, the state consists of a set $S$ of sequence elements (e.g., characters) tagged with unique identifiers of the form $(x, r)$ for $x \in \R$ and $r$ a replica id.  The identifiers, and their corresponding elements, are totally ordered by their real number component, with ties broken using an arbitrary total order on replica ids.  Calling $\msf{prepare}$ on an operation to insert an element $e$ to the right of a given element $e'$ results in a message $\msf{add}(e, (x, r))$, where $r$ is the generating replica's id and $x$ is a real number halfway between the identifiers for $e'$ and the next element to the right of $e'$.  The effect of $\msf{add}(e, (x, r))$ is to add $e$ to the state with identifier $(x, r)$.  We also have operations $\msf{remove}(x, r)$, which remove the element with identifier $(x, r)$.

For our novel CRDT, we wish to add an operation $\msf{reverse}$ acting as
\[
\msf{reverse} \cdot S = \{(e, (-x, -r)) \mid (e, (x, r)) \in S\},
\]
where ``negative'' replica ids are ordered oppositely to ordinary replica ids.
Let $\mc{C}_{\text{rev}}$ be the commutative data type with the same state space as $\mc{C}$ and with $\msf{reverse}$ as its single operation and message.  We choose $\msf{reverse}$ to come after messages of $\mc{C}$ in the arbitration order, so that concurrent insertions are also reversed.  Next, we must find an action $\act$ of $\msf{reverse}$ on messages of $\mc{C}$ satisfying assumption (reordering).  This is easily done by setting
\begin{align*}
&\msf{reverse} \act \msf{add}(e, (x, r)) = \msf{add}(e, (-x, -r)) \\
&\msf{reverse} \act \msf{remove}(x, r) = \msf{remove}(-x, -r).
\end{align*}
Assumptions (action commutes) and (preserves authors) hold as well.
Thus the semidirect product CRDT $\mc{C} \rtimes_\act \mc{C}_{\text{rev}}$ implements the operations of $\mc{C}$ together with $\msf{reverse}$.

\begin{myrmk}
It is also possible to use the opposite arbitration order, in which $\msf{reverse}$ operations come first and hence do not affect concurrent operations, using a semidirect product in which a message $m$ of $\mc{C}$ acts on a $\msf{reverse}$ operation to add an exception for $m$.  We omit the details.
\end{myrmk}

\paragraph{Sequence with Range Remove Operation}
Let $\mc{C}$ be the continuous sequence CRDT described above.  We can also use the semidirect product to add a range remove operation $\msf{rremove}((x, r), (x', r'))$ to $\mc{C}$, which removes all elements with identifiers $(x, r) \le (y, s) \le (x', r')$.  This could be useful as an optimization in a collaborative text editor when a user highlights and deletes a block of text, in place of sending separate $\msf{remove}$ messages for each character.

Let $\mc{C}_{\text{rremove}}$ be the commutative data type with the same state space as $\mc{C}$ and with operations $\msf{rremove}((x, r), (x', r'))$, acting as described above.  We choose $\msf{rremove}$ operations to come after messages of $\mc{C}$ in the arbitration order, so that they affect concurrent additions.  We define $\act$ by
\begin{align*}
&\msf{rremove}((x, r), (x', r')) \act \msf{add}(e, (y, s)) \\
&\qquad = \begin{cases} \msf{id} &\mbox{if $(x, r) \le (y, s) \le (x', s')$} \\ \msf{add}(e, (y, s)) &\mbox{otherwise} \end{cases}
\\
&\msf{rremove}((x, r), (x', r')) \act \msf{remove}(y, s) = \msf{remove}(y, s),
\end{align*}
where $\msf{id}$ is a new message we add to $\mc{C}$ which acts as the identity.  Then the semidirect product assumptions are easily checked, so we get a semidirect product CRDT $\mc{C} \rtimes_\act \mc{C}_{\text{rremove}}$ implementing the operations of $\mc{C}$ together with $\msf{rremove}$.

As with $\msf{reverse}$, it is also possible to implement the opposite arbitration order, in which $\msf{rremove}$ operations only remove elements that were added causally before the $\msf{rremove}$.

\paragraph{Semirings}
The example of Section \ref{sec:example} generalizes to the case when we replace $(\Z, +, \times)$ by any commutative semiring:
\begin{mydef}
A \textit{commutative semiring} \cite{semiring} is a tuple $(S, \oplus, \otimes)$ consisting of a set of states $S$ and binary operations $\oplus, \otimes: S \times S \ra S$, such that $\oplus$ and $\otimes$ are associative and commutative, and $\otimes$ distributes over $\oplus$, i.e., for all $s, t, u \in S$, $s \otimes (t \oplus u) = (s \otimes t) \oplus (s \otimes u)$.
\end{mydef}
Examples include $(\Z, +, \times)$, $(\N, \min, +)$, $(\N, \max, \min)$, and $(\N, \min, \max)$.  We will define a semidirect product CRDT implementing the operations of any commutative semiring, from which we immediately get CRDTs implementing all of these examples.

Given a commutative semiring $(S, \oplus, \otimes)$, let $\mc{C}_1$ be the commutative data type with state space $S$ and operations $\msf{add}(s)$, $s \in S$, acting as $\msf{add}(s) \cdot \sigma = s \oplus \sigma$.  Similarly let $\mc{C}_2$ be the commutative data type with state space $S$ and operations $\msf{mult}(t)$, $t \in S$, acting as $\msf{mult}(t) \cdot \sigma = t \otimes \sigma$.  Following the example of Section \ref{sec:example}, we can define $\act$ by
\[
\msf{mult}(t) \act \msf{add}(s) = \msf{add}(t \otimes s),
\]
and then we get a semidirect product CRDT $\mc{C}_1 \rtimes_\act \mc{C}_2$.  Here assumption (reordering) holds by the distributive property:
\begin{align*}
\msf{mult}(t) \cdot (\msf{add}(s) \cdot \sigma) = t \otimes (s \oplus \sigma) = (t \otimes s) \oplus (t \otimes \sigma) \\ = \msf{add}(t \otimes s) \cdot (\msf{mult}(t) \cdot \sigma).
\end{align*}
The semidirect product CRDT implements both $\msf{add}(s)$ and $\msf{mult}(t)$ operations on $S$, with $\msf{mult}(t)$ operations affecting the arguments to concurrent $\msf{add}(s)$ operations.

\begin{myrmk}
By iteratively applying a semidirect product construction similar to the semiring construction, it appears to be possible to construct a CRDT supporting operations $\msf{mult}$, $\msf{add}$, $\min$, and $\max$ on a natural number register, with arbitration order $\msf{mult} > \msf{add} > \min > \max$.  We leave the details to future work.
\end{myrmk}

% \paragraph{Semidirect Product Groups}
% TODO

\subsection{Existing CRDTs}
% We now decompose several existing CRDTs as semidirect products of simpler CRDTs, by constructing semidirect products with the same semantics.  Here the components are simpler in that they have fewer conflicting operations, and many of them are commutative data types with no need for conflict resolution.  %In each case, the components of the semidirect product are simpler than the result, in that they have fewer operations.  Many of the components are also fully commutative, i.e., all operations commute regardless of concurrency.

\paragraph{Boolean Flags}
The \textit{enable-wins flag} \cite{riak_datatypes} is a simple CRDT with state space $\Sigma = \{\msf{enabled}, \msf{disabled}\}$, initial state $\msf{disabled}$, and operations $\msf{enable}$ and $\msf{disable}$, with sequential semantics
\begin{align*}
&\msf{enable} \cdot \sigma = \msf{enabled}, &\msf{disable} \cdot \sigma = \msf{disabled}
\end{align*}
for all $\sigma \in \Sigma$.  In case of concurrent $\msf{enable}$ and $\msf{disable}$ operations, the $\msf{enable}$ wins, so that the state is $\msf{enabled}$.  More precisely, letting $\prec$ be the causal order on messages, a replica's state is $\msf{enabled}$ if it has received any messages $m = \msf{enable}$ such that it has not received any messages $m' = \msf{disable}$ with $m \prec m'$; otherwise the state is $\msf{disabled}$.

As our first step in constructing a semidirect product with the same semantics as the enable-wins flag, we need to partition the operations into two sets so that we can easily construct CRDTs for each set.  Since there are only two operations, we take the two sets to be $\{\msf{enable}\}$ and $\{\msf{disable}\}$.  Both of these singleton sets of operations, acting on the original state space $\Sigma$, define commutative data types (i.e., all messages commute naturally without conflict resolution), since trivially $\msf{enable}$ commutes with itself and likewise for $\msf{disable}$.  Hence they also define op-based CRDTs satisfying the strengthened property (iib).

Next, we must choose an arbitration order between the two CRDTs.  That is, we must choose which operation should be applied first in the case of concurrent $\msf{enable}$ and $\msf{disable}$ operations.  The enable-wins semantics corresponds to $\msf{disable}$ going first, so that the subsequent $\msf{enable}$ wins.  Thus we take $\mc{C}_1$ to be the CRDT with operation set $\{\msf{disable}\}$, while $\mc{C}_2$ is the CRDT with operation set $\{\msf{enable}\}$.

Next, we need to find a partial action $\act$ of $\mc{C}_2$ messages on $\mc{C}_1$ messages satisfying assumption (reordering).  As a first attempt, we add an identity message $\msf{id}$ to the message set of $\mc{C}_1$, acting as $\msf{id} \cdot \sigma = \sigma$, and define
\begin{align*}
    &\msf{enable} \act \msf{disable} = \msf{id}, &\msf{enable} \act \msf{id} = \msf{id}.
\end{align*}
This satisfies assumption (reordering) because $$\msf{enable} \circ \msf{disable} = \msf{enable} = \msf{id} \circ \msf{enable}.$$  It also trivially satisfies assumptions (action commutes) and (preserves authors), where we formally allow any replica to be an author of $\msf{id}$.  Hence the semidirect product $\mc{C}_1 \rtimes_{\act} \mc{C}_2$ is defined.  However, it does not quite implement the enable-wins semantics in the following scenario (using $\msf{d}$ and $\msf{e}$ as obvious abbreviations):
\[
\begindc{\commdiag}[50]
\obj(5,7){$A$}
\obj(5,00){$B$}
\obj(10,7)[A1]{$\msf{d}$}
\obj(20,7)[A2]{$\msf{e}$}
\obj(30,7)[A3]{$\msf{d}$}
\obj(40,7)[A4]{$\msf{e}$}
\obj(50,7)[A5]{$\msf{e}$}
\obj(10,00)[B1]{$\msf{d}$}
\obj(20,00)[B2]{$\msf{e}$}
\obj(30,00)[B3]{$\msf{d}$}
\obj(40,00)[B4]{$\msf{e}$}
\obj(50,00)[B5]{$\msf{e}$}

\mor{A1}{A2}{$\msf{e}$}
\mor{A2}{A3}{$\msf{d}$}
\mor{A3}{A4}{$(\msf{e})$}[\atleft, \dotArrow]
\mor{A4}{A5}{$(\msf{id})$}[\atleft, \dotArrow]
\mor{B1}{B2}{$\msf{e}$}[\atright, \solidarrow]
\mor{B2}{B3}{$\msf{d}$}[\atright, \solidarrow]
\mor{B3}{B4}{$(\msf{e})$}[\atright, \dotArrow]
\mor{B4}{B5}{$(\msf{id})$}[\atright, \dotArrow]

\mor{A2}{B4}{}
\mor{A3}{B5}{}
\mor{B2}{A4}{}
\mor{B3}{A5}{}
\enddc
\]
Here replicas $A$ and $B$ both start in state $\msf{disabled}$, then each concurrently issue operations $\msf{enable}$ followed by $\msf{disable}$.  After receiving each others' messages, the intended result is $\msf{disabled}$, since both $\msf{enable}$ operations have been overwritten by causally greater $\msf{disable}$ operations.  However, the $\msf{disable}$ messages instead both get transformed to $\msf{id}$ by the concurrent $\msf{enable}$ messages, resulting in state $\msf{enabled}$.

To resolve this, observe that in the enable-wins flag, the effect of a $\msf{disable}$ message is not always to set the state to $\msf{disabled}$.  Instead, it cancels the effect of any causally lesser $\msf{enable}$ messages, resulting in state $\msf{disabled}$ only if no $\msf{enable}$ messages remain.

This motivates us to replace $\Sigma$ with the state space $\Sigma'$ whose states are sets $S$ of $\msf{enable}$ messages.  The externally visible value of a state $S$ is $\msf{enabled}$ if $S \neq \emptyset$ and $\msf{disabled}$ if $S = \emptyset$.  In place of $\mc{C}_1$, we define the CRDT $\mc{C}_1'$ which has a single operation and message $\msf{disable}$ acting as $\msf{disable} \cdot S = \emptyset$.  In place of $\mc{C}_2$, we define the CRDT $\mc{C}_2'$ which has a single operation $\msf{enable}$ and message space $M_2'$ containing infinitely many messages of the form $\msf{enable}$\footnote{We need infinitely many of them so that it makes sense to talk about ``the set of $\msf{enable}$ messages received so far'' as a set containing one copy of $\msf{enable}$ for each time a replica issues an $\msf{enable}$ operation.}, acting as $m \cdot S = S \cup \{m\}$.  Note that both $\mc{C}_1'$ and $\mc{C}_2'$ are commutative data types.

We once again choose $\mc{C}_1'$ (the $\msf{disable}$ operations) to come first in the arbitration order.  Thus the semidirect product transformation $\act'$ must satisfy, for $m \in M_2'$ and $s \in \Sigma'$,
\[
m \cdot (\msf{disable} \cdot S) = (m \act' \msf{disable}) \cdot (m \cdot S),
\]
i.e., $\{m\} = (m \act \msf{disable}) \cdot (S \cup \{m\})$.  We see that $m \act \msf{disable}$ should be a message that intersects the state with $\{m\}$.  No $\mc{C}_1'$ message does this, but we can easily add one that does.  This leads us to need more new messages, etc.  Eventually we are led to define the messages of $\mc{C}_1'$ to be $M_1' = \{\msf{disable}(S') \mid S' \in \Sigma'\}$, with $\msf{disable}(S')$ acting as
\[
\msf{disable}(S') \cdot S = S \cap S'.
\]
The original operation $\msf{disable}$ is prepared as the message $\msf{disable}(\emptyset)$.  Then we can satisfy assumption (reordering) by setting
\[
m \act \msf{disable}(S') = \msf{disable}(S' \cup \{m\}).
\]
Assumptions (action commutes) and (preserves authors) are trivially satisfied as well, where we again formally allow any replica to be an author of $\msf{disable}(S')$.  Thus we get a semidirect product CRDT
\[
\mc{C}_1' \rtimes_{\act'} \mc{C}_2'.
\]
In this CRDT, the effect of an $\msf{enable}$ operation is to add itself to the internal state, making the externally visible value $\msf{enabled}$.  When a replica receives a message $\msf{disable}(\emptyset)$ corresponding to a $\msf{disable}$ operation, it first acts by all concurrent $\msf{enable}$ messages.  This leads to the message $\msf{disable}(S')$, where $S'$ is the set of all $\msf{enable}$ messages concurrent to the $\msf{disable}$ operation.  That message is then applied to the state, thus removing any $\msf{enable}$ messages that were causally lesser than the $\msf{disable}$ operation.  If this removes all $\msf{enable}$ messages, the internal state becomes $\emptyset$ and the externally visible value becomes $\msf{disabled}$.  Thus we have indeed implemented the enable-wins semantics as a semidirect product of commutative data types.

\begin{myrmk}
The internal state $S$ duplicates the role of the history set $H$ in the semidirect product $\mc{C}_1' \rtimes_{\act'} \mc{C}_2'$, except that it is trimmed to only contain relevant (not yet disabled) messages.  Thus we can optimize the construction, without changing its semantics, by allowing $S$ to double as $H$.
\end{myrmk}

We can likewise decompose the \textit{disable-wins flag}, which is identical but with the roles of $\msf{enable}$ and $\msf{disable}$ switched.

\paragraph{Sets}
Using a similar decomposition to the enable-wins flag, we can decompose two set CRDTs, the \textit{add-wins set} and \textit{remove-wins set}.  These have a set as their externally visible state and operations $\msf{add}(a)$ and $\msf{remove}(a)$ for $a$ in a universe of set elements, with the obvious sequential semantics.  In the add-wins set, $\msf{add}(a)$ operations win over concurrent $\msf{remove}(a)$ operations, like in the enable-wins flag.  More precisely, letting $\prec$ be the causal order on messages, the value of an add-wins set after receiving messages $H$ is
\[
\{a \mid \exists (m = \msf{add}(a)) \in H.\ \forall (m' = \msf{remove}(a)) \in H.\ m \nprec m'\}.
\]
In other words, the add-wins set functions like an enable-wins flag for each set element.

This leads us to decompose the add-wins set as the following semidirect product of commutative data types.  Let $\Sigma$ be the state space whose states are sets $S$ of $\msf{add}(a)$ messages, for $a$ in the universe of set elements.  The externally visible value of a state $S \in \Sigma$ is $\{a \mid \exists (m = \msf{add}(a)) \in S\}$.  Let $\mc{C}_2$ be the commutative data type with state space $\Sigma$, operations $\msf{add}(a)$ for $a$ in the universe, and infinitely many messages of the form $\msf{add}(a)$ for each $a$, acting as $m \cdot \Sigma = \Sigma \cup \{m\}$.  Let $\mc{C}_1$ be the commutative data type with state space $\Sigma$, messages $\msf{remove}(a)(S')$ for $S'$ a set of $\msf{add}(a)$ messages, acting as
\[
\msf{remove}(a)(S') \cdot S = \{m \in S \mid m = \msf{add}(b), b \neq a\} \cup (S \cap S'),
\]
and operations $\msf{remove}(a)$ with corresponding prepared message $\msf{remove}(a)(\emptyset)$.  Define the semidirect product action $\act$ by
\begin{align*}
(m = \msf{add}(a)) \act \msf{remove}(b)(S') = \begin{cases} \msf{remove}(b)(S') &\mbox{if $b \neq a$} \\ \msf{remove}(b)(S' \cup \{m\}) &\mbox{if $b = a$}. \end{cases}
\end{align*}

Then it is easy to check assumptions (reordering), (action commutes), and (preserves authors), so we get a semidirect product CRDT $\mc{C}_1 \rtimes_\act \mc{C}_2$.  This CRDT implements the add-wins semantics, as follows.  The effect of an $\msf{add}(a)$ operation is to add itself to the internal state, making the externally visible value contain $a$.  When a replica receives a message $\msf{remove}(a)(\emptyset)$ corresponding to a $\msf{remove}(a)$ operation, it first acts by all concurrent $\msf{add}$ messages.  This leads to the message $\msf{remove}(a)(S')$, where $S'$ is the set of all $\msf{add}(a)$ messages concurrent to the $\msf{remove}(a)$ operation.  That message is then applied to the state, thus removing any $\msf{add}(a)$ messages that were causally lesser than the $\msf{remove}(a)$ operation.  If this removes all $\msf{add}(a)$ messages, the externally visible state no longer includes $a$.

The remove-wins set, in which the roles of $\msf{add}$ and $\msf{remove}$ are switched, can be decomposed analogously.

\paragraph{Reset-Wins Resettable CRDTs}
Let $\mc{C}$ be any CRDT, with state space $\Sigma$, message set $M$, and initial state $\sigma^0$.  We wish to add a $\msf{reset}$ operation to $\mc{C}$ which restores its state to $\sigma^0$.  This is necessary for constructing map CRDTs with values in $\mc{C}$ \cite{riak_datatypes, antidote}: removing a key triggers a reset on the corresponding value.

One possible concurrency semantics is \textit{reset-wins}: any operation concurrent to a $\msf{reset}$ operation is ignored.  When used in a map CRDT, this leads to the \textit{remove-wins semantics}, in which operations on a value have no effect if its key is concurrently removed \cite[\S 2.1.5]{crdt_overview_preguica}.

To construct a reset-wins CRDT from $\mc{C}$, we divide the operations into two sets, one containing the original operations on $\mc{C}$ and one containing the single operation $\msf{reset}$.  For the reset-wins semantics, we want $\msf{reset}$ operations to come last in the arbitration order, so that they overwrite concurrent $\mc{C}$ operations.  This leads us to take the first component of the semidirect product to be $\mc{C}$, and we take the second component to be the CRDT $\mc{C}_{\text{reset}}$ with state space $\Sigma$ and a single operation $\msf{reset}$ acting as $\msf{reset} \cdot \sigma = \sigma^0$.  Note that $\mc{C}_{\text{reset}}$ is a commutative data type.

To define the action $\act$ of $\{\msf{reset}\}$ on $M$, we expand $M$ to contain a message $\msf{id}$ acting as $\msf{id} \cdot \sigma = \sigma$, and we set
\[
\msf{reset} \act m = \msf{id}
\]
for all $m \in M$.  This satisfies assumption (reordering), since for all $m \in M$ and $\sigma \in \Sigma$,
\[
\msf{reset} \cdot (m \cdot \sigma) = \sigma^0 = \msf{id} \cdot (\msf{reset} \cdot \sigma).
\]
Assumptions (action commutes) and (preserves authors) are easily checked, where we formally allow any replica to be an author of $\msf{id}$.  Thus we get a semidirect product CRDT $\mc{C} \rtimes_\act \mc{C}_{\text{reset}}$ implementing the reset-wins semantics.

\paragraph{Observed-Reset Resettable CRDTs}
An alternative concurrency semantics for resets is \textit{observed-reset}: a $\msf{reset}$ only affects causally prior (i.e., observed) operations.  The corresponding \textit{observed-remove} semantics is used in many map CRDTs when a value's key is removed, including for most value types in Riak \cite{riak_datatypes}, Antidote \cite{antidote}, and the JSON CRDT of Kleppmann and Beresford \cite{json}.

The observed-reset semantics corresponds to the arbitration order in which $\msf{reset}$ comes before concurrent operations of $\mc{C}$, so that the concurrent operations are not overwritten by the $\msf{reset}$.  This is the opposite of the reset-wins arbitration order.

For the observed-reset semantics to make sense, we assume that for any execution of $\mc{C}$ following Algorithm \ref{alg:crdt_use}, any subset of the messages appearing in that execution can be applied to $\sigma^0$ in casual order, resulting in a defined state (not $\bot$).  This ensures that if some messages are issued concurrently to a $\msf{reset}$, they can still be applied to the reset state $\sigma^0$.

Our semidirect product construction is similar to the enable-wins flag and add-wins set.  Let $\mc{C}'$ be the same as $\mc{C}$ but with an extra component of the state storing a finite sequence $L'$ of messages from $M$.  When a message is applied to a state $(\sigma, L')$, it is appended to $L'$ in addition to acting on $\sigma$.  Let $\mc{C}_{\text{reset}}'$ be a CRDT with the same state space as $\mc{C}'$ and with messages $\msf{reset}(L)$ for $L$ a finite causally ordered sequence of messages from $M$, acting as
\[
\msf{reset}(L) \cdot (\sigma, L') = ((L \cap L') \cdot \sigma^0, L \cap L')
\]
where $L \cap L'$ denotes $L'$ restricted to messages appearing in $L$, and $(L \cap L') \cdot \sigma^0$ denotes the result of applying the messages in $L \cap L'$ to $\sigma^0$ in order.  By assumption, $(L \cap L') \cdot \sigma^0 \neq \bot$.  Also, $\mc{C}_{\text{reset}}'$ has a single operation $\msf{reset}$ with corresponding prepared message $\msf{reset}(\emptyset)$.  Observe that $\mc{C}_{\text{reset}}'$ is a commutative data type.

Define the semidirect product action $\act$ by
\[
m \act \msf{reset}(L) = \msf{reset}(L \cup \{m\}),
\]
with $m$ appended to the end of the list.  Then the semidirect product $\mc{C}_{\text{reset}}' \rtimes_\act \mc{C}'$ has the operations of $\mc{C}$ plus $\msf{reset}$, and it implements the observed-reset semantics.  Indeed, before a $\msf{reset}$ message is applied to the state, it is modified to ignore all concurrent messages that have already been applied to the state, and any concurrent messages that are later applied to the state are unaffected.

As with the enable-wins flag, we can optimize the construction by allowing the sequence $L'$ appearing in the state to double as the semidirect product's history set.

% \paragraph{JSON}
% TODO

% \subsection{Semantics}
% The construction of $\mc{C}_1 \rtimes \mc{C}_2$ does not give a clear ``global'' view of its semantics.  We know from Section \ref{sec:example} that the CRDT's behavior cannot always be explained in terms of some sequential message ordering.  Instead, the following proposition explicitly describes the semantics of $\mc{C}_1 \rtimes \mc{C}_2$ in almost all cases.
% \begin{myprop}
% Let $\mc{C}_1 \rtimes \mc{C}_2$ be as above, and additionally assume that $\act$ is a total function and that assumption (reordering) holds even when $i_1 = i_2$.  Let $(m_1, t_1'), \dots, (m_n, t_n')$ be a sequence of messages such that
% \[
% (\sigma^f, t^f, H^f) := (m_n, t_n') \cdot ((m_{n-1}, t_{n-1}) \cdot (\cdots ((m_1, t_1') \cdot (\sigma^0, \mathbf{0}, \emptyset))\cdots)),
% \]
% the result of applying $(m_1, t_1'), \dots, (m_n, t_n')$ to the initial state, is defined.  Then TO-DO
% \end{myprop}

\section{Generality}

\subsection{Interpretation as Operational Transformation}
\label{sec:ot}
Operational Transformation (OT) is an alternative technique for developing replicated data types that is often viewed as an opposing technique to CRDTs.  OT predates CRDTs and is commonly used in applications, such as Google Docs.  However, general OT is complicated.  As a result, many OT algorithms have turned out to be incorrect \cite{ot_correctness}.  CRDTs were introduced to avoid the complexities and errors of OT \cite{treedoc_crdt, crdt_survey_2011}, by using extra metadata in states as well as prepared messages in place of operations, and by requiring commutativity of concurrent messages instead of transformation properties.

It is thus interesting that, even though the semidirect product is a CRDT construction that reproduces the semantics of many existing CRDTs, it can be viewed as a restricted kind of OT, as we now describe.

For our definition of OT, we use the framework of Ressel, Nitsche-Ruhland, and Gunzenh\"auser \cite{ot_ressel}.  In this framework, to define an OT object with operations $M_1 \cup M_2$\footnote{Unlike op-based CRDTs, OT objects typically do not differentiate between operations and messages, regarding them as the same.}, we must define a transformation function $$tf_1: (M_1 \cup M_2) \times (M_1 \cup M_2) \ra M_1 \cup M_2$$ satisfying Transformation Properties 1 and 2 in \cite{ot_ressel} (copied in Theorem \ref{thm:ot} below).  When a replica receives an operation $m$ from another replica, it transforms $m$ by concurrent operations already in the history according to the adOPTed-algorithm \cite[Figure 8]{ot_ressel}.  It then applies the resulting operation $m'$ to its state and stores $m$.

\begin{mydef}
Let $\mc{C}_1 \rtimes_\act \mc{C}_2$ be a semidirect product of CRDTs.  Let $M_1$ be the messages of $\mc{C}_1$ and $M_2$ those of $\mc{C}_2$.  We define the \textit{semidirect product transformation} on $M_1 \cup M_2$ by
\[
tf_1(m, l) = \begin{cases} l \act m &\mbox{if $m \in M_1$ and $l \in M_2$} \\ m &\mbox{otherwise.} \end{cases}
\]
\end{mydef}
This corresponds to the fact that when an operation $m \in M_1$ is applied after a concurrent operation $l \in M_2$, we apply $l \act m$ instead of $m$.

Our CRDT construction essentially implements the operational transformation object corresponding to $tf_1$, except that we only store messages from $M_2$ in the history.
\begin{mythm}
\label{thm:ot}
The semidirect product transformation $tf_1$ satisfies Transformation Properties 1 and 2 of \cite{ot_ressel}, i.e.,\footnote{\cite{ot_ressel} does not explicitly address the possibility that an operation may be undefined ($\bot$) on a state, or that only certain combinations of operations can be concurrent.  We slightly weaken the Transformation Properties to permit these possibilities.}
\begin{enumerate}[1.]
  \item For all potentially concurrent $l, m \in M_1 \cup M_2$ and $\sigma \in \Sigma$ such that $l \cdot \sigma \neq \bot$ and $m \cdot \sigma \neq \bot$, $$l \cdot (tf_1(m, l) \cdot \sigma) = m \cdot (tf_1(l, m) \cdot \sigma)$$
  \item For all potentially concurrent $k, l, m \in M_1 \cup M_2$, $$tf_1(tf_1(l, k), tf_1(m, k)) = tf_1(tf_1(l, m), tf_1(k, m)).$$
\end{enumerate}
Thus by \cite[Theorem 1]{ot_ressel}, the corresponding OT object is eventually consistent.  Furthermore, $\mc{C}_1 \rtimes_\act \mc{C}_2$ has the same semantics as this OT object.

Conversely, suppose we have an OT object whose operations can be partitioned into disjoint sets $O_1$ and $O_2$, such that the transformation function $tf_1$ satisfies $$\mbox{$tf_1(o, p) = o$ unless $o \in O_1$ and $p \in O_2$.}$$  Then letting $\mc{C}_1$ be the restriction of the OT object to $O_1$ operations (with $O_1$ also as the set of messages), $\mc{C}_2$ be its restriction to $O_2$ operations, and $\act: O_2 \times O_1 \ra O_1$ be given by
\[
o_2 \act o_1 = tf_1(o_1, o_2),
\]
we have that $\mc{C}_1$ and $\mc{C}_2$ are op-based CRDTs, and $\mc{C}_1 \rtimes_\act \mc{C}_2$ is a semidirect product CRDT with the same semantics as the OT object.
\end{mythm}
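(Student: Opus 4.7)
The plan is to verify both directions of the equivalence by case analysis on which of $M_1$ and $M_2$ contains each operation, relying on the three semidirect product assumptions and the strengthened property (iib) of each component CRDT.

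For the forward direction, I would first verify Transformation Property 1 (TP1). Since $tf_1$ is the identity on its first argument except when that argument lies in $M_1$ and its second in $M_2$, TP1 splits into four cases. When $l, m$ both lie in $M_1$ (respectively $M_2$), both sides collapse to the commutativity $l \cdot (m \cdot \sigma) = m \cdot (l \cdot \sigma)$, which follows from property (iib) of $\mc{C}_1$ (respectively $\mc{C}_2$). In the two mixed cases, say $l \in M_1$ and $m \in M_2$, unfolding $tf_1$ turns TP1 into an instance of assumption (reordering). I would then verify TP2 by case analysis on the triple $(k, l, m)$, noting that by assumption (preserves authors) $tf_1$ sends messages of $M_1$ to messages of $M_1$, so $tf_1$ is nontrivial only when its first argument lies in $M_1$ and its second in $M_2$. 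Tabulating the eight cases shows that all but one reduce to trivial equalities; the sole nontrivial case, $l \in M_1$ with $k, m \in M_2$, unfolds to $m \act (k \act l) = k \act (m \act l)$, which is precisely assumption (action commutes). With TP1 and TP2 in hand, Ressel et~al.'s Theorem 1 supplies eventual consistency of the associated OT object. To establish semantic equivalence with $\mc{C}_1 \rtimes_\act \mc{C}_2$, I would prove by induction on the length of an execution that the replica state under the adOPTed algorithm coincides with the $\sigma$-component of the state maintained by $\msf{effect}_\rtimes$; both apply exactly the same transformed messages, where the order-independence of the causal ordering of history messages (established in the proof of Theorem \ref{thm:correctness}) makes this well defined.

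For the converse direction, I would define each $\mc{C}_i$ on operation set $O_i$ by taking operations as their own prepared messages and the OT transition as $\msf{effect}_i$, and set $\act$ by $o_2 \act o_1 = tf_1(o_1, o_2)$. Property (iib) for $\mc{C}_i$ follows from TP1 applied to two operations both in $O_i$, since $tf_1$ degenerates to the identity on such pairs, turning TP1 into commutativity. Assumption (reordering) follows from TP1 applied to a pair with $l \in O_1$ and $m \in O_2$, after unfolding $tf_1$ using the hypothesised partitioning. Assumption (action commutes) follows from TP2 specialised to $l \in O_1$ and $k, m \in O_2$, exactly mirroring the forward direction. For (preserves authors), which is vacuous on the OT side, I would assign each operation an arbitrary author and declare $tf_1$ to preserve it; this is compatible with property (iib) because that property only requires commutativity of messages assignable to different authors. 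Semantic equivalence follows from the same induction as in the forward direction.

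The main obstacle I anticipate is the semantic equivalence step: the semidirect product carries bookkeeping (the history set $H$ together with vector-clock timestamps) that the adOPTed algorithm does not expose, so relating the two executions requires careful invocation of the order-independence argument. A secondary subtlety in the converse direction is manufacturing the (preserves authors) assumption in the OT setting, where no intrinsic notion of authorship is available.
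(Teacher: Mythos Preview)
Your proposal is correct and follows essentially the same case analysis as the paper for both TP1, TP2, and the converse direction; the reductions you identify (property (iib) for same-type pairs, (reordering) for mixed pairs, (action commutes) for the single nontrivial TP2 case) match the paper's exactly.

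One small point of difference is worth noting. For the semantic equivalence between $\mc{C}_1 \rtimes_\act \mc{C}_2$ and the adOPTed OT object, you propose an induction on execution length and flag the bookkeeping mismatch as the main obstacle. The paper instead makes a direct structural observation: in the adOPTed algorithm, a received operation is transformed by concurrent history operations that must themselves first be recursively transformed, and this recursion is the usual source of complexity in OT. Here the recursion collapses because $M_2$ messages are always fixed by $tf_1$, and $M_1$ messages never alter the second argument of $tf_1$; hence the adOPTed transformation of $m$ is exactly $m$ when $m\in M_2$ and exactly the $m_{\text{act}}$ of Algorithm~\ref{alg:construction} when $m\in M_1$. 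Your induction would eventually uncover this, but isolating the collapse up front is what makes the argument short. Also, a minor misattribution: that $tf_1$ maps $M_1$ into $M_1$ follows from the type of $\act$, not from (preserves authors); the paper uses (preserves authors) only to ensure that ``potentially concurrent'' is stable under transformation.
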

\begin{proof}
For the first statement, verifying the Transformation Properties is a simple case analysis.  Transformation Property 1 holds when $l, m \in M_1$ because $\mc{C}_1$ is a CRDT, and likewise for $l, m \in M_2$.  The interesting case is when $l \in M_1$ and $m \in M_2$ or vice-versa, in which case it reduces to assumption (reordering).  Transformation Property 2 is trivial in all cases except when $l \in M_1$ and $k,m \in M_2$, in which case it reduces to assumption (action commutes).  (Assumption (preserves authors) only matters in that it ensures that ``potential concurrency'' behaves nicely under transformation.)

For the converse statement, $\mc{C}_1$ and $\mc{C}_2$ are easily CRDTs: if $o, o' \in O_1$, then $tf_1(o, o') = o$ and $tf_1(o', o) = o'$, so $o$ and $o'$ commute by Transformation Property 1, and similarly for $\mc{C}_2$.  The semidirect product assumption (reordering) holds by Transformation Property 1 again, assumption (action commutes) holds by Transformation Property 2, and (preserves authors) holds trivially as any message can have any author.

It remains to see, for both the CRDT-to-OT conversion and the converse conversion, that $\mc{C}_1 \rtimes_\act \mc{C}_2$ has the same semantics as the corresponding OT object.  The adOPTed-algorithm transforms a received operation $o$ by all concurrent operations in the history, except that these operations must themselves be transformed by appropriate operations before they are used as transformers.  While this can be complicated in general, in our case it does not matter: $M_2$ messages (resp.\ $O_2$ operations, for the OT-to-CRDT conversion) are always transformed trivially, and $M_1$ messages (resp.\ $O_1$ operations) never alter the target of their transformation.  Hence the result of the adOPTed transformation applied to $m$ is always $m$ if $m \in M_2$, and it is the $m_{\text{act}}$ appearing on line \ref{alg_line:m_act} of Algorithm \ref{alg:construction} if $m \in M_1$.  Thus the semidirect product CRDT and the OT object end up applying the same message (resp.\ operation) to the state. 
\end{proof}

\begin{myrmk}
Although the semidirect product is a restricted kind of OT, it avoids OT's pitfalls.  First, the restriction on $tf_1$ is severe: while general OT allows $tf_1$ to do anything, we require the result of $tf_1(l, m) \circ m$ to always be equivalent to either $l \circ m$ or $m \circ l$, according to the arbitration order.  This in turn simplifies Transformation Property 2 to assumption (action commutes), which is trivial to verify in all of our examples below, while for general OT algorithms it is often a source of incorrectness.  Second, we allow the semidirect product's components to be full-fledged CRDTs, not just user operations acting on user-visible state.  This allows us to use CRDTs to implement portions of the state that CRDTs are good at, like sequences, while using the semidirect product for conflicts that are best resolved using transformation.  Indeed, it appears impossible to construct a sequence CRDT using semidirect products alone, since there is no clear arbitration order between conflicting insertions.

As a result, the semidirect product is more CRDT-like in character.  Indeed, it reproduces the semantics of several existing CRDTs (see Section \ref{sec:examples}).  Additionally, we avoid the adOPTed-algorithm's need to store a multidimensional history of transformed operations computed using a doubly-recursive algorithm.  Instead, our history is a subset of the messages actually sent by replicas, and these messages transform other messages directly instead of needing to be transformed recursively.
\end{myrmk}

\subsection{Decomposing POLog CRDTs as Semidirect Products}
One existing general model for constructing op-based CRDTs is the POLog (partially ordered log) model of Baquero, Almeida, and Shoker \cite{pure_op_based_crdts_extended}.  In that model, the externally visible state of a CRDT is defined as a function of the log of operations partially ordered by causality.  We used this model implicitly when defining the semantics of the add-wins set in Section \ref{sec:examples}: letting $\prec$ be the causal order on messages, the value of an add-wins set after receiving messages $H$ is
\[
f_{\text{aw-set}}(H, \prec) := \{a \mid \exists (m = \msf{add}(a)) \in H.\ \forall(m' = \msf{remove}(a)) \in H.\ m \nprec m'\}.
\]
The POLog model advocates using $f_{\text{aw-set}}$ directly to implement an add-wins set, by applying it to the current message history each time a user queries the set's state.  This is in contrast to traditional op-based CRDT designs, which typically store a metadata-enhanced version of the original data type's state instead of the full message history.

Given a POLog CRDT, it is interesting to ask whether the CRDT can be decomposed as a semidirect product of simpler CRDTs, by which we mean CRDTs with fewer operations.  Ideally, we would like to repeat this decomposition until we get an iterated semidirect product of commutative data types.  We can then build up an alternate construction of the original CRDT using these semidirect products.  This can clarify the semantics and suggest a more efficient implementation of the original CRDT, by reasoning about conflicts between concurrent messages in a restricted, uniform way instead of allowing the full power of a POLog function.

The following proposition gives a general condition under which a POLog CRDT can be decomposed as the semidirect product of two simpler CRDTs.
\begin{myprop}
Let $f$ be a function defining a POLog CRDT $\mc{C}$ with operation set $O_1 \cup O_2$, i.e., $f$ is a function mapping a partially ordered log of operations in $O_1 \cup O_2$ to an externally visible state.  Suppose that $f(L) = f(L')$ whenever $L$ and $L'$ are partially ordered logs differing only in that, for some message $m_1$ corresponding to an $O_1$ operation and some message $m_2$ corresponding to an $O_2$ operation, $m_1 \prec m_2$ in the partial order of $L$ while $m_1$ is concurrent to $m_2$ in the partial order of $L'$.  Then $\mc{C}$ has the same externally visible semantics as some semidirect product of $\mc{C}_1$ and $\mc{C}_2$, where $\mc{C}_1$ (resp.\ $\mc{C}_2$) is a CRDT with the same externally visible semantics as the restriction of $\mc{C}$ to $O_1$ operations (resp.\ $O_2$ operations).
\end{myprop}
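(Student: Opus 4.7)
The plan is to construct a log-based decomposition in which $\mc{C}_1$, $\mc{C}_2$, and their semidirect product all represent state by the full causal log of timestamped messages, so that both sides of the intended equivalence evaluate by applying $f$ to the same underlying object.

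First I would fix a common state space $\Sigma$ consisting of finite partially ordered logs of timestamped messages from $O_1 \cup O_2$, with initial state $\sigma^0 = \emptyset$ and $\msf{eval}(q, L) = f(L)(q)$ given by $\mc{C}$'s POLog function. For $\mc{C}_1$, let $\msf{prepare}_1$ tag an $O_1$ operation with the preparing replica's new vector-clock timestamp, and let $\msf{effect}_1(m, L) = L \cup \{m\}$ with the partial order extended by timestamp comparison. Define $\mc{C}_2$ analogously for $O_2$. Both are op-based CRDTs satisfying the strengthened property (iib), since appending distinct messages to a set commutes outright. The externally visible behavior of $\mc{C}_1$ (resp.\ $\mc{C}_2$) on executions that issue only $O_1$ (resp.\ $O_2$) operations is $f$ evaluated on the $O_1$-only (resp.\ $O_2$-only) log, matching the restriction of $\mc{C}$ to that operation set.

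I would then choose the trivial action $m_2 \act m_1 := m_1$. Reordering holds because $m_1$ and $m_2$ commute when appended to any log, action commutes is immediate, and preserves authors is trivial because $\act$ leaves its second argument untouched. Theorem \ref{thm:correctness} then yields that $\mc{C}_1 \rtimes_\act \mc{C}_2$ is a well-defined op-based CRDT.

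The main step is to show that $\mc{C}_1 \rtimes_\act \mc{C}_2$ has the same externally visible semantics as $\mc{C}$. I would prove by induction on the length of a causal execution that the $\sigma$-component of each replica's semidirect product state equals the causal log that $\mc{C}$ would produce on the same execution, so that $\msf{eval}_\rtimes$ reduces to applying $f$ to that log. The hypothesis enters as follows: the semidirect product's arbitration order implicitly treats each concurrent pair $(m_1, m_2) \in M_1 \times M_2$ as though $m_1 \prec m_2$, whereas the actual log preserves concurrency; the hypothesis, applied once per such pair, guarantees that this discrepancy does not alter the value of $f$, so the evaluation of the semidirect product agrees with that of $\mc{C}$. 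I expect the main obstacle to be carrying this comparison out cleanly for an arbitrary causally ordered execution with many interleaved messages, in particular tracking which messages in the history $H$ are truly concurrent to a given received $m \in M_1$ and invoking the hypothesis once for each concurrent $O_1 \times O_2$ pair appearing in the resulting log.
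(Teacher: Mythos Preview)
Your construction is valid, but it proves more than you think, and not for the reason you give. With the trivial action $m_2 \act m_1 = m_1$ and messages carrying full vector clocks computed from the log $\sigma$, the $\sigma$-component of the semidirect-product state is \emph{exactly} the causal log of $\mc{C}$ at every step: each $\msf{effect}_i$ simply appends a correctly timestamped message, appending commutes, and nothing is ever transformed. Hence $\msf{eval}_\rtimes$ applies $f$ to the true causal log and agrees with $\mc{C}$ \emph{without invoking the hypothesis at all}. Your final paragraph is therefore confused: with a trivial $\act$ there is no arbitration; the semidirect product does not ``implicitly treat each concurrent pair as though $m_1 \prec m_2$.'' You have in effect shown that the proposition, as literally stated, holds for any POLog CRDT and any partition $O_1 \cup O_2$, hypothesis or not.

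The paper's proof is genuinely different and does use the hypothesis. It keeps the $O_1$-timestamps and $O_2$-timestamps deliberately unrelated, so the component CRDTs cannot see cross-type causality through their own state. Each $\mc{C}_1$ message is enriched with a set $S$ of $\mc{C}_2$ messages, the action is $m_2 \act (m_1, S) = (m_1, S \cup \{m_2\})$, and the effect of $(m_1, S)$ inserts $m_1$ into the log as causally after every $O_2$ message \emph{not} in $S$ (plus transitive closure). The resulting log $L^\rtimes$ can differ from the true causal log $L$ only by dropping some relations $m_1 \prec m_2$ to concurrency, which is exactly what the hypothesis licenses. That argument makes the semidirect mechanism---the action $\act$---do real work; your construction bypasses it by smuggling full causal information into $\sigma$. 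Both establish the stated proposition, but the paper's version is the one that explains why the hypothesis is the right condition and why the decomposition is meaningful rather than vacuous.
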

% Informally, the hypothesis is that an $O_1$ operation concurrent to an $O_2$ operation has the same effect as if it were causally prior to the $O_2$ operation.  This is the rule we used to guide the semidirect product construction, and so the result makes informal sense.
\begin{proof}[Proof Sketch]
To define $\mc{C}_1$ and $\mc{C}_2$, we start with the POLog CRDTs derived from the restriction of $f$ to $O_1$ operations (resp., $O_2$ operations), modified so that their partially ordered logs may each contain both $O_1$ and $O_2$ operations.  We then modify the messages of $\mc{C}_1$ so that in addition to an $O_1$ operation and a timestamp, they contain a set $S$ of $\mc{C}_2$ messages.  We define the semidirect product action by
\[
m_2 \act (m_1, S) = (m_1, S \cup \{m_2\}).
\]
Finally, we define the $\msf{effect}$ of $(m_1, S)$ to be to add $m_1$ to the log with partial order relations: $m_1' \prec m_1$ for all $m_1'$ corresponding to $O_1$ operations with lower timestamps; $m_2 \prec m_1$ for all $m_2$ corresponding to $O_2$ operations that are not in $S$; and any additional relations required by transitivity (so that $\prec$ remains a partial order).  Note that the timestamps on $O_1$ operations are unrelated to those on $O_2$ operations, so that $\act$ is the only way we can reason about concurrent $O_1$ and $O_2$ operations.

After performing an execution on both $\mc{C}$ and $\mc{C}_1 \rtimes_\act \mc{C}_2$, the resulting partially ordered logs $L$ and $L^\rtimes$ differ only in that, for some messages $m_1, m_2 \in L$ corresponding to an $O_1$ operation and an $O_2$ operation, respectively, $m_1 \prec m_2$ in $L$ but $m_1$ is concurrent to $m_2$ in $L^\rtimes$.  Thus by hypothesis, $f(L) = f(L^\rtimes)$.
\end{proof}

Of the non-commutative POLog CRDTs described in \cite{pure_op_based_crdts_extended}, only the multi-value register cannot be decomposed using this proposition, while the enable-wins flag, disable-wins flag, add-wins set, and remove-wins set decompose into semidirect products of commutative data types, as described in Section \ref{sec:examples}.

\section{Optimizations}
\label{sec:optimizations}
As defined above, a state of $\mc{C}_1 \rtimes_\act \mc{C}_2$ includes the set $H$ of all $M_2$ messages that have already been applied.  This set can grow without bound, potentially making the state large and affecting the performance of an implementation.  We now discuss two optimizations that can reduce this state size.

\subsection{Causal Stability}
In a state $(\sigma, t, H)$ of $\mc{C}_1 \rtimes_\act \mc{C}_2$, observe that a pair $(m, t') \in H$ only matters when we apply a message concurrent to it.  Thus once $(m, t')$ becomes \textit{causally stable} \cite[\S 5.2]{pure_op_based_crdts_extended}, meaning that all future inputs $(l, s')$ to $\msf{effect}_\rtimes$ will be causally greater than $(m, t')$, we can discard $(m, t')$ from $H$ without changing the externally visible behavior of $\mc{C}_1 \rtimes_\act \mc{C}_2$.

\subsection{Compressing the History}
In some cases, instead of storing the history of $M_2$ messages in our state as a set $H$, we can store a single $M_2$ message representing the composition of all of these messages.  Specifically, assume:
\begin{itemize}
  \item $M_2$ is closed under composition, in the sense that for all $m_2, m_2' \in M_2$, there exists a message $m_2 \circ m_2' \in M_2$ such that for all $\sigma \in \Sigma$, $(m_2 \circ m_2') \cdot \sigma = m_2 \cdot (m_2' \cdot \sigma)$, and for all $m_1 \in M_1$, $(m_2 \circ m_2') \act m_1 = m_2 \act (m_2' \act m_1)$.
  \item $M_2$ messages commute (not just when they are concurrent), i.e., for all $m_2, m_2' \in M_2$, $m_2 \circ m_2' = m_2' \circ m_2$.
  \item For all $m_2 \in M_2$, the function $m_2' \mapsto m_2 \circ m_2'$ is injective.  We let $m_2^{-1}$ be a formal symbol acting as the corresponding inverse partial function, i.e., $m_2^{-1} \circ (m_2 \circ m_2') := m_2'$.
\end{itemize}
\begin{myex}
The example of Section \ref{sec:example} satisfies these assumptions if we exclude $\msf{mult}(0)$, with $\msf{mult}(n) \circ \msf{mult}(n') = \msf{mult}(nn')$ and with $\msf{mult}(n)^{-1} \circ \msf{mult}(m) = \msf{mult}(m/n)$ when $n$ divides $m$.
\end{myex}

To accommodate the initial state, we formally add the identity function $\msf{id}$ to $M_2$.
\begin{mydef}
The \textit{compressed semidirect product of $\mc{C}_1$ and $\mc{C}_2$} is the op-based CRDT $\mc{C}_1 \rtimes_\act^{\text{comp}} \mc{C}_2 = (\Sigma \times M_2, (\sigma^0, \msf{id}), \msf{prepare}_{\text{comp}}, \msf{effect}_{\text{comp}}, \msf{eval}_{\text{comp}})$ with components defined in Algorithm \ref{alg:compressed}.
\end{mydef}
\begin{algorithm}[ht!]
\begin{algorithmic}[1]
\State \textbf{function} $\msf{prepare}_{\text{comp}}(o, (\sigma, h), r)$:
  \Indent
  \If{$o$ is a $\mc{C}_1$ operation} \State \Return $(\msf{prepare}_1(o, \sigma, r), h)$
  \Else \Comment{$o$ is a $\mc{C}_2$ operation}
    \State \Return $(\msf{prepare}_2(o, \sigma, r), -)$
  \EndIf
  \EndIndent
\State \textbf{function} $\msf{effect}_{\text{comp}}((m, h'), (\sigma, h))$:
  \Indent
  \If{$m \in M_2$}
    \State \Return $(\msf{effect}_2(m, \sigma), h' \circ h)$
  \Else \Comment{$m \in M_1$}
    \State $m_{\text{act}} \gets ((h')^{-1} \circ h) \act m$
    \State \Return $(\msf{effect}_1(m_{\text{act}}, \sigma), h)$
  \EndIf
  \EndIndent
\State \textbf{function} $\msf{eval}_{\text{comp}}(q, (\sigma, h))$: \Return $\msf{eval}(q, \sigma)$
\end{algorithmic}
\caption{Components of the compressed semidirect product $\mc{C}_1 \rtimes_\act^{\text{comp}} \mc{C}_2$.  In $\msf{effect}_{\text{comp}}$, if any portion of the output is $\bot$, then we set the whole output to be $\bot$.}
\label{alg:compressed}
\end{algorithm}

One can show that $\mc{C}_1 \rtimes_\act^{\text{comp}} \mc{C}_2$ satisfies the CRDT properties using a proof similar to Theorem \ref{thm:correctness}.  Informally, instead of storing the history $H$, we store the composition $h$ of all messages in $H$.  When applying an $M_1$ message $(m, h')$ to a state $(\sigma, h)$, $h'$ is the composition of all $M_2$ messages causally prior to $(m, h')$, while $h$ is the composition of all $M_2$ messages applied to the state.  Thus $(h')^{-1} \circ h$ is the composition of all $M_2$ messages applied to the state that are concurrent to $(m, h')$ instead of causally prior to it, so that $m_{\text{act}}$ is the same as it would be in $\mc{C}_1 \rtimes_\act \mc{C}_2$.

\begin{myex}
We can use the compressed semidirect product for the following examples above:
\begin{itemize}
    \item The sequence with reverse operation, since $\msf{reverse}$ is closed under composition (after adding an identity operation $\msf{id}$), commutative, and invertible.  The resulting CRDT is essentially equivalent to treating the possibly-reversed sequence as a view of an ordinary sequence CRDT, with $\msf{reverse}$ operations toggling the view, and with user inputs reversed whenever they are performed on a reversed view.
    \item The semiring CRDT, whenever the semiring $(S, \oplus, \otimes)$ is such that for all $s \in S$, the function $t \mapsto s \otimes t$ is injective.  This includes the example of Section \ref{sec:example} if we exclude $\msf{mult}(0)$ operations, which corresponds to the semiring $(\Z, +, \times)$.
    
    It also includes the semiring $(\N, \min, +)$.  This can be used as a form of resettable counter, with $\min(0)$ behaving as a reset-to-0 operation.  It is practically interesting because we can implement this CRDT with constant-sized state, in contrast to existing resettable counter designs.  However, it does not have either of the typical resettable CRDT semantics described in Section \ref{sec:examples} (reset-wins or observed-reset).  It is similar to observed-reset, but with the following anomaly:
\[
\begindc{\commdiag}[50]
\obj(5,7){$A$}
\obj(5,00){$B$}
\obj(10,7)[A1]{0}
\obj(20,7)[A2]{1}
\obj(30,7)[A3]{0}
\obj(40,7)[A4]{1}
\obj(50,7)[A5]{1}
\obj(10,00)[B1]{0}
\obj(20,00)[B2]{1}
\obj(30,00)[B3]{0}
\obj(40,00)[B4]{1}
\obj(50,00)[B5]{1}

\mor{A1}{A2}{$\msf{add}(1)$}
\mor{A2}{A3}{$\msf{min}(0)$}
\mor{A3}{A4}{$(\msf{add}(1))$}[\atleft, \dotArrow]
\mor{A4}{A5}{$(\msf{min}(1))$}[\atleft, \dotArrow]
\mor{B1}{B2}{$\msf{add}(1)$}[\atright, \solidarrow]
\mor{B2}{B3}{$\msf{min}(0)$}[\atright, \solidarrow]
\mor{B3}{B4}{$(\msf{add}(1))$}[\atright, \dotArrow]
\mor{B4}{B5}{$(\msf{min}(1))$}[\atright, \dotArrow]

\mor{A2}{B4}{}
\mor{A3}{B5}{}
\mor{B2}{A4}{}
\mor{B3}{A5}{}
\enddc
\]
Here replicas $A$ and $B$ both increment the counter and then reset it, but the $\msf{add}(1)$ operations transform the concurrent $\min(0)$ operations into $\min(1)$ operations, giving a final state of 1.  Meanwhile, the observed-reset semantics would give a final state of 0.

Nonetheless, this resettable counter may still be useful, since the large state size of true observed-reset resettable counters \cite{resettable_counters} has led some to adopt alternative semantics in return for smaller state \cite{riak_datatypes, composable_embedded_counters}.
\end{itemize}
\end{myex}

\section{Related Work}
Several works describe general techniques for constructing replicated data types.  Leijnse, Almeida, and Baquero \cite{op_based_patterns} discuss patterns in existing op-based CRDTs, such as constructions of set CRDTs from flag CRDTs.  The patterns they identify are orthogonal to the semidirect product.

Baquero et~al.\ \cite{state_based_patterns} give general techniques for composing state-based CRDTs, the other kind of CRDTs besides op-based, using lattice merge functions.  Similar techniques are used by the Bloom$\mbox{}^L$ distributed programming model \cite{bloom_lattices} and the LVars parallel programming model \cite{lvars}.  While those works use lattice theory to compose state-based CRDTs, we use an idea from abstract algebra to compose op-based CRDTs.  Also, those works (especially \cite{bloom_lattices, lvars}) focus on composition in the sense of composite data types (e.g., tuples and maps), while we focus on composing different operations acting on the same base data type.

Mergeable Replicated Data Types (MRDTs), defined by Kaki et~al.\ \cite{mrdts} and built on top of Irmin \cite{irmin}, are branch-and-merge based replicated data types that use a three-way merge function for sets to define MRDTs for various data types automatically.  Like us, Kaki et~al.\ define a replicated integer register supporting addition and multiplication operations, but with different semantics: a multiplication is treated as its equivalent addition.  Our construction is less automatic but more flexible: MRDTs are only defined for data types built as views of relations on sets, and they give at most one semantics for a given data type, excluding examples like a remove-wins set or a reset-wins resettable type.  In contrast, our examples demonstrate the semidrect product's wide applicability.  Also, MRDT's branch-and-merge system model more closely resembles state-based CRDTs, in contrast to our use of op-based CRDTs.

A particular kind of arbitration between concurrent CRDT operations appears in work on tunable CRDTs by Rijo \cite{tunable_crdts} and the concept of ``cast-off updates'' in a survey by Pregui\c{c}a \cite{crdt_overview_preguica}.  Specifically, they consider operations that become irrevelant due to other operations, such as removes cancelled by adds in an add-wins set.  Rijo gives a construction based on generic arbitration rules, in which operations of one type can cancel those of another depending on their causal relationship.  %This is different from the semidirect product, which only supports interactions between concurrent operations but also allows more general interactions than just cancellation.

As described in Section \ref{sec:ot}, the semidirect product can be viewed as a restricted kind of operational transformation (OT).  We avoid the complexity of general OT approaches by effectively only allowing the transformation function to reorder operations, not arbitrarily transform them, and only in the specific case of a $\mc{C}_2$ operation followed by a $\mc{C}_1$ operation.  Also, we allow that the state is a CRDT state and the operations being transformed are CRDT messages, not just user operations acting on user-visible state, so that the components of a semidirect product may continue using CRDT techniques.

Lasp \cite{lasp} is an Erlang programming model that allows one to create views of a given CRDT, such as a functionally mapped view of a set, which update in an eventually consistent way.  We focus on adding in-place, mutating operations to CRDTs, as opposed to creating immutable views.  Indeed, several of our examples, such as the integer register with addition and multiplication operations and the map with a higher-order map operation, were motivated by the goal of developing CRDTs that support these operations in-place instead of as views.

% POLog-based op-based CRDTs \cite{pure_op_based_crdts_extended} attempt to clarify the semantics of op-based CRDTs by constraining them to be solely a function of the causal history user-issued operations.  While this leads to clearer constructions than the traditional $\msf{prepare}$/$\msf{effect}$ model, POLog designs still handle concurrency in ad-hoc ways.  Additionally, without optimizations, POLog CRDTs need to frequently re-evaluate the external state as a function of the operation history, compromising efficiency.  The semidirect product instead applies messages to the state sequentially, without ever needing to re-apply old messages.

OpSets \cite{opsets} and SECROs \cite{generic_rdt} both convert generic data types into replicated data types by sorting operations into an eventually consistent total order compatible with the causal order.  While we aim to sort operations so that they respect both the causal order and the arbitration order, this is not always possible (see Section \ref{sec:example}), so we instead use the transformation $\act$ to approximate the desired order.  As a result, the semidirect product gives semantics more typical of CRDTs than OpSets or SECROs: conflicts between concurrent operations are resolved uniformly according to an arbitration order chosen at design time, instead of according to an arbitrary total order at run time.  Also, through use of the transformation $\act$, we avoid the need to re-order and re-apply operations that are received out-of-order.

\section{Conclusion}
We introduced the semidirect product of op-based CRDTs.  This construction combines the operations of two CRDTs while handling concurrency conflicts between them in a uniform way.  Specifically, it implements an arbitration order on concurrent messages using a restricted kind of operational transformation.  We constructed novel CRDTs through composition, and our examples also showed that several existing CRDTs can be decomposed as semidirect products of simpler CRDTs.

For future work, we plan to investigate iterated semidirect products.  In particular, it would be interesting to see whether complicated CRDTs can be decomposed as iterated semidirect products of commutative data types, thus completely handling concurrency conflicts through the semidirect product.  We will also pursue implementations of the novel CRDTs described above.

\begin{acks}
We thank Carlos Baquero for feedback on an early draft of this work.
\end{acks}

\bibliographystyle{ACM-Reference-Format}
\bibliography{general_bib}

\appendix

\section{Algebraic Motivation}
\label{sec:algebra}
The semidirect product of CRDTs is inspired by the semidirect product of groups, which we now describe.

In abstract algebra, a \textit{group} is a set $G$ together with a binary operation $\bullet: G \times G \ra G$ such that:
\begin{itemize}
  \renewcommand\labelitemi{--}
  \item $\bullet$ is associative: $g \bullet (h \bullet k) = (g \bullet h) \bullet k$
  \item There is an identity $1_G \in G$ satisfying $1_G \bullet g = g \bullet 1_G = g$
  \item Each $g \in G$ has an inverse $g^{-1}$ such that $g \bullet g^{-1} = g^{-1} \bullet g = 1_G$.
\end{itemize}
% A group is \textit{abelian} if $\bullet$ commutes: $g \bullet h = h \bullet g$.
Let $(G_1, \bullet_1)$ and $(G_2, \bullet_2)$ be groups.  Suppose we have an action $\act: G_2 \times G_1 \ra G_1$ satisfying:
\begin{itemize}
  \renewcommand\labelitemi{--}
  \item $g_2 \act (g_1 \bullet_1 g_1') = (g_2 \act g_1) \bullet_1 (g_2 \act g_1')$
  \item $(g_2 \bullet_2 g_2') \act g_1 = g_2 \act (g_2' \act g_1)$
  \item $1_{G_2} \act g_1 = g_1$
  \item For each $g_2$, $g_1 \mapsto g_2 \act g_1$ is an invertible function.
\end{itemize}
Then the \textit{semidirect product of $G_1$ and $G_2$ with respect to $\act$} is the group $G_1 \rtimes_\act G_2$ with underlying set $G_1 \times G_2$ and binary operation \cite[\S 5.5]{dummit_foote}
\[
(g_1, g_2) \bullet_\rtimes (g_1', g_2') := (g_1 \bullet_1 (g_2 \act g_1'), g_2 \bullet_2 g_2').
\]
We can think of $G_1$ and $G_2$ as subgroups of $G_1 \rtimes_\act G_2$ (i.e., subsets that are groups) via the maps $g_1 \mapsto (g_1, 1_{G_2})$, $g_2 \mapsto (1_{G_1}, g_2)$.  If we put the elements corresponding to $g_1$ and $g_2$ in the ``wrong'' order ($g_2$ then $g_1$), they get rearranged as
\[
(1_{G_1}, g_2) \bullet_\rtimes (g_1, 1_{G_2}) = (g_2 \act g_1, g_2) = (g_2 \act g_1, 1_{G_2}) \bullet_\rtimes (1_{G_1}, g_2),
\]
i.e., as ($g_2 \act g_1$ then $g_2$).  This inspired the semidirect product CRDT's use of the equivalence between $(m_2 \act m_1) \circ m_2$ and $m_2 \circ m_1$, for $m_1$ a $\mc{C}_1$ message and $m_2$ a $\mc{C}_2$ message, as a way to reorder messages so that $\mc{C}_1$ messages effectively come before $\mc{C}_2$ messages.

% \begin{myrmk}
% A \textit{group action} of a group $(G, \bullet)$ on a set $S$ is a function $\cdot: G \times S \ra S$ such that $(g \bullet h) \cdot s = g \cdot (h \cdot s)$ and $1_G \cdot s = s$.  If $\bullet$ is commutative, then we can make this into a CRDT with state space $S$ and set of operations $G$ acting via $\cdot$.  If $G_1$ and $G_2$ are both commutative groups with an action on $S$, then given a function $\act$ as above, 
% \end{myrmk}

\end{document}